\documentclass[a4paper,UKenglish,cleveref, autoref, thm-restate, pdfa]{lipics-v2021}
\usepackage[table]{xcolor}
\usepackage{booktabs,colortbl}
\usepackage{bm}
\usepackage{graphicx, tikz, svg}
\usepackage[ruled,linesnumbered,vlined]{algorithm2e}

\SetCommentSty{commentforalgorithmtwoe}
\SetKwComment{Comment}{$\triangleright$\ }{}
\crefname{algocf}{alg.}{algs.}
\Crefname{algocf}{Algorithm}{Algorithms}
\usetikzlibrary{matrix,calc,patterns,patterns.meta,fadings,backgrounds,intersections,through,arrows.meta}
\usepackage{awesomebox}
\usepackage{footnote}
\usepackage{pdfrender} 
\newcommand{\backgroundcontour}[1]{%
\textpdfrender{
	TextRenderingMode=FillStrokeClip,
	LineWidth=1.75pt,
	FillColor=white,
	StrokeColor=white,
	MiterLimit=1
}{#1}%
}

\definecolor{colorA}{HTML}{D55E00}
\definecolor{colorAdark}{HTML}{944000}

\definecolor{colorB}{HTML}{009E73}
\definecolor{colorBdark}{HTML}{006147}

\definecolor{colorC}{HTML}{0072B2}
\definecolor{colorCdark}{HTML}{005A8F}

\definecolor{colorD}{HTML}{F0E442}
\definecolor{colorDdark}{HTML}{5F5807}

\definecolor{colorE}{HTML}{E69F00}
\definecolor{colorEdark}{HTML}{755000}

\definecolor{colorF}{HTML}{56B4E9}
\definecolor{colorFdark}{HTML}{125C87}

\definecolor{colorG}{HTML}{CC79A7}
\definecolor{colorGdark}{HTML}{91366B}

\title{Revisiting \texorpdfstring{$\bm{O(n \log \log n)}$}{O(n log log n)} Chaining for Anchored Edit Distance}
\titlerunning{Revisiting \texorpdfstring{$\bm{O(n \log \log n)}$}{O(n log log n)} Chaining for Anchored Edit Distance}
\author{Nicola Rizzo}{University of Helsinki,
  Finland}{nicola.rizzo@helsinki.fi}{https://orcid.org/0000-0002-2035-6309}{European Union's Horizon Europe research and innovation programme under grant agreement No 101060011 (TeamPerMed).}
\author{Ragnar {Groot Koerkamp}}{Karlsruhe Institute of Technology, Germany}{ragnar.grootkoerkamp@gmail.com}{https://orcid.org/0000-0002-2091-1237}{}
\authorrunning{N. Rizzo and R. Groot Koerkamp} 
\Copyright{Nicola Rizzo and Ragnar Groot Koerkamp} 
\ccsdesc[300]{Theory of computation~Design and analysis of algorithms}
\ccsdesc[300]{Theory of computation~Sorting and searching}
\ccsdesc[500]{Applied computing~Genomics}
\keywords{Colinear chaining, Anchored edit distance, Sequence alignment,
  Predecessor structure} 
\category{} 
\relatedversion{} 
\acknowledgements{The authors wish to thank the Finnish Computing Competence Infrastructure (FCCI) for supporting this project with computational and data storage resources.}
\nolinenumbers 
\supplementdetails[cite=llchain,swhid={swh:1:dir:78de8057c7f61afc5518ef3b8a46a5ac91e21d52}]{Software}{https://github.com/nrizzo/llchain}

\EventEditors{Nadia El-Mabrouk and Fabio Vandin}
\EventNoEds{2}
\EventLongTitle{26th International Conference on Algorithms for Bioinformatics (WABI 2026)}
\EventShortTitle{WABI 2026}
\EventAcronym{WABI}
\EventYear{2026}
\EventDate{August 31--September 2, 2026}
\EventLocation{L'Aquila, Italy}
\EventLogo{}
\SeriesVolume{390}
\ArticleNo{13}

\newcommand{\parag}[1]{\subparagraph{#1}}
\newcommand{\qbeg}{q_{\text{s}}}
\newcommand{\qend}{q_{\text{e}}}
\newcommand{\tbeg}{t_{\text{s}}}
\newcommand{\tend}{t_{\text{e}}}
\newcommand{\wprec}{\prec_\mathrm{w}}
\newcommand{\chainxprec}{\prec_{\mathtt{ChainX}}}
\newcommand{\sprec}{\prec_\mathrm{s}}

\newcommand{\startanchor}{a_\mathrm{start}}
\newcommand{\finalanchor}{a_\mathrm{end}}
\newcommand{\ddiag}{\Delta_\mathrm{diag}}
\newcommand{\rank}{\mathsf{rank}}
\DeclareMathOperator{\linfty}{L_\infty}
\DeclareMathOperator{\lone}{L_1}
\DeclareMathOperator{\anchorstart}{\mathsf{start}}
\DeclareMathOperator{\anchorend}{\mathsf{end}}
\DeclareMathOperator{\ov}{\mathsf{ov}}
\DeclareMathOperator{\cost}{\mathsf{cost}}
\DeclareMathOperator{\gcost}{\mathsf{gcost}}
\DeclareMathOperator{\pgcost}{\mathsf{pgcost}}
\DeclareMathOperator{\sgcost}{\mathsf{sgcost}}

\DeclareMathOperator{\connect}{\mathsf{connect}}
\DeclareMathOperator{\score}{\mathsf{score}}
\DeclareMathOperator{\fscore}{\mathsf{fscore}}

\DeclareMathOperator{\diag}{diag}

\DeclareMathOperator*{\argmin}{arg\,min}

\newtheorem{problem}[theorem]{Problem}
\crefname{observation}{observation}{observations}
\Crefname{observation}{Observation}{Observations}

\begin{document}
\maketitle

\begin{abstract}
    Colinear chaining is a classical heuristic for sequence alignment: it enables
    scalable genome comparison and is a main component of many state-of-the-art
    read mappers based on seed-chain-extend.
    The earliest $O(n \log \log n)$ and $O(n \log n)$ time algorithms by Eppstein et al.\ (J.\ ACM, 1992) chained $n$ fragments between two sequences $T$ and $Q$ while minimizing a gap cost based on the diagonal distance $\ddiag$ between consecutive fragments.
    They also forbid fragment overlaps, which are essential in current chaining formulations: in long-read mapping, overlaps improve sensitivity and avoid restrictions on the fragment class considered.
    Jain, Gibney, and Thankachan (J.\ Comput.\ Biol.\ 2022) recently combined a
    $\Delta_\text{diag}=|\Delta_T-\Delta_Q|$ overlap cost with the classic $\linfty=\max(\Delta_T, \Delta_Q)$ gap cost that
    takes the maximum between the horizontal and vertical gap between the fragments and they proved that chaining under this cost model is equivalent to the \emph{anchored edit distance}.

    We improve the existing $O(n \log^3 n)$-time algorithm for anchored edit
    distance to $O(n \log \log n)$ time in $O(n)$ space, by combining the gap-cost computation of Chao and Miller (Algorithmica, 1995) with the
    overlap-cost computation of Baker and Giancarlo (ESA, 1998).
    By developing
    \texttt{llchain}, a simpler $O(n \log n)$-time
    implementation of our method, we show how chaining algorithms
    that might have been recently overlooked by the bioinformatics community scale competitively to millions of fragments and large genomes.
    On average, \texttt{llchain} is $10\times$ faster than other methods on instances
    with 3\,000\,000 anchors, and over $2.3\times$ faster on MEMs
    between HiFi reads and a reference human genome.
\end{abstract}

\section{Introduction}
Colinear chaining, known throughout the decades as \emph{Wilbur--Lipman sequence
comparison}, \emph{global fragment chaining}, \emph{sparse dynamic programming},
and just \emph{chaining} is a fundamental technique for approximate alignment that dates back to the 1980s~\cite{wilbur1983rapid,wilbur1984context,DBLP:books/cu/Gusfield1997,DBLP:books/oldenbusch/2013}.
Chaining techniques have been widely adopted in different sub-fields of computational genomics: whole-genome alignment tools~\cite{ohlebusch2006chaining}; sequence aligners~\cite{kurtz2004versatile,DBLP:journals/bioinformatics/Li18,DBLP:journals/bioinformatics/KoerkampI24}; read mappers employing the seed-chain-extend strategy~\cite{sahlin2023survey}; and even assembly evaluation~\cite{DBLP:journals/bioinformatics/BushmanovaALSP16,DBLP:journals/bioinformatics/MikheenkoPSAG18}.
While the general chaining problem statement is consistent in the literature --
to find the highest-scoring ordered subset of the input fragments that is
consistent with an alignment -- the exact formulation and algorithms change depending on the application, and it can be difficult to compare the different results.
In particular, the behaviour of chaining is mainly affected by fragment score, gap cost, and overlap cost.
Popular aligners use a heuristic chaining cost based on the diagonal distance and allow overlaps~\cite{DBLP:journals/bioinformatics/Li18,DBLP:journals/ploscb/RenC21,lastz}, whereas some methods chain without overlaps and consider a function of the gaps in the two sequences~\cite{abouelhoda2004chainer,DBLP:journals/almob/OttoHGS11}, in one case obtaining a \emph{lower bound} on the cost of the optimal alignment \cite{DBLP:journals/bioinformatics/KoerkampI24}.
Instead, we consider the \emph{anchored edit
distance}~\cite{BakerG02,DBLP:conf/cpm/MakinenS20,JainGT22}, which gives an \emph{upper} bound
on the true edit distance. In this model, a number of \emph{anchors} or
\emph{fragments} are given, which are exact matches between the two input
sequences. Then, the anchored edit distance is the minimum cost of a global alignment where
only the matches supported by an anchor are free.

\begin{definition}[Anchored edit distance]\label{def:aed}
  Given two strings $T$ and $Q$, an \emph{anchor} (or \emph{fragment}) $a=(t_s, t_e, q_s, q_e)$ is a
  tuple for which $T[t_s, t_e) = Q[q_s, q_e)$.
  Given an arbitrary set of $n$ anchors, the \emph{anchored edit distance} is the cost of
  a global alignment of $T$ and $Q$ that minimizes the total number of insertions,
  deletions, substitutions, and matches \emph{not supported by an anchor}.
  Thus, only \emph{supported} matches are free.
\end{definition}

Jain et al.~\cite{JainGT22} show that the anchored edit distance can be found by computing a minimum-cost chain using an $O(n \log^3 n)$ algorithm,
and Rizzo et al.~\cite{RizzoCM25} simplified the chaining cost as follows.

\begin{theorem}[Equivalence to chaining]
  The anchored edit distance equals the minimum-cost chain of fragments under
  the following colinear chaining cost between the end $E=(t_e, q_e)$ of one
  anchor $a$ and the
  start $S'=(t'_s, q'_s)$ of the next anchor $a'$:
  \begin{equation*}
  \connect(a, a') := \max(\linfty(E, S'), \ddiag(E, S')),
  \end{equation*}
  where
  \begin{align*}
    \linfty((t, q), (t', q')) &:= \max(0,\Delta_T, \Delta_Q)= \max(0,t'-t, q'-q) \text{\quad is the maximum gap, and} \\
    \ddiag((t, q), (t', q')) &:= |\Delta_T - \Delta_Q| = \lvert\diag(t,q) - \diag(t',q') \rvert \text{\quad is the diagonal distance}.
  \end{align*}
\end{theorem}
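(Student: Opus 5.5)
We show two inequalities. Throughout, a chain is a sequence of anchors $a_1,\dots,a_k$ with $\anchorstart$ nondecreasing in both coordinates (precedence as in the chaining model), extended by a virtual start anchor $\startanchor$ ending at $(0,0)$ and a virtual end anchor $\finalanchor$ starting at $(|T|,|Q|)$. Its cost is $\sum_i \connect(a_i,a_{i+1})$.

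\textbf{Chain cost is achievable (AED $\le$ chain cost).} Given a chain, we build an alignment whose unsupported cost is at most the chain cost. Between consecutive anchors $a,a'$, first suppose the gap is non-overlapping, i.e.\ $E\le S'$ coordinatewise with $\Delta_T,\Delta_Q\ge 0$. Fill the rectangle with $\min(\Delta_T,\Delta_Q)$ substitutions/unsupported matches and $|\Delta_T-\Delta_Q|$ indels, for a cost of $\max(\Delta_T,\Delta_Q)=\linfty\ge\ddiag$. If the anchors overlap, say $\Delta_T<0$, we truncate: we follow $a$ until we reach the point on $a$ from which a path to $S'$ is possible, or equivalently we enter $a'$ late. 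We then switch diagonals with exactly $|\Delta_T-\Delta_Q|$ indels, paying $\ddiag$, plus $\max(0,\Delta_T,\Delta_Q)$ for the non-diagonal part. A short case analysis on the signs of $\Delta_T,\Delta_Q$ shows the cost equals $\max(\linfty,\ddiag)$. All anchor-diagonal steps are supported matches and cost $0$. Concatenating these pieces gives a valid global alignment.

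\textbf{Alignment induces a chain (chain cost $\le$ AED).} Take an optimal alignment path and mark its maximal runs of supported matches. Each run lies inside some anchor; pick one anchor per run, taking the one supporting its first match and extending greedily while the same anchor supports it. This gives a sequence of anchors consistent in order. Between two consecutive runs, the path goes from a point $P$ on anchor $a$ to a point $P'$ on $a'$. The alignment cost of this stretch is at least $\max(|\Delta_T|,|\Delta_Q|)$ between $P$ and $P'$, and at least the diagonal change, since each indel changes the diagonal by one. We then need $\connect(a,a')\le$ the cost between $P$ and $P'$. The argument is to move $P$ forward to $E$ and $P'$ backward to $S'$ along their diagonals. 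This leaves $\ddiag$ unchanged and can only decrease $\max(0,\Delta_T,\Delta_Q)$, since the increments shrink or become negative. Hence $\connect(a,a')=\max(\linfty(E,S'),\ddiag(E,S'))\le\max(\linfty(P,P'),\ddiag(P,P'))\le$ the stretch cost. Summing over consecutive pairs gives the bound.

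\textbf{Main obstacle.} The delicate step is the overlapping case in the first direction. We must verify that a monotone path exists that leaves $a$ before its end, or enters $a'$ after its start, with cost exactly $\max(\linfty,\ddiag)$ and not more. In particular, when $\Delta_T<0<\Delta_Q$ the cost is $\ddiag=\Delta_Q-\Delta_T$. We must check that the exit point on $a$ and the entry point on $a'$ can be chosen inside the anchors, which uses the chain's precedence condition on starts and ends. We also need that the resulting alignments for consecutive pairs do not conflict when an anchor is truncated at both ends. We would handle this by choosing each switch point as late as possible on the earlier anchor. The precedence condition then guarantees that consecutive truncations of the same anchor are disjoint and ordered.
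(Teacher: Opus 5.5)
\textbf{Verdict: genuine gap, in the second direction.} Your first direction is correct, and easier than your ``main obstacle'' suggests. You can always enter $a'$ exactly at $S'$. In the overlap case, leave $a$ where it crosses column $t'_s$ (if $\diag(a)\ge\diag(a')$) or row $q'_s$ (otherwise). That point lies on $a$ because $\anchorstart(a)\le\anchorstart(a')$ and the intervals overlap, and exactly $\ddiag$ vertical (resp.\ horizontal) steps lead from it to $S'$. Since every anchor is entered at its start, exits never conflict with entries. Your phrase ``paying $\ddiag$, plus $\max(0,\Delta_T,\Delta_Q)$'' overcounts: in the overlap case the cost is exactly $\ddiag$.

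The gap is the sentence ``This gives a sequence of anchors consistent in order.'' A fixed optimal alignment may enter an anchor in its middle, and that anchor's start need not dominate the start of the previously used anchor. So the extracted sequence need not be a $\prec$-chain. Sliding $P'$ back to $S'$ is valid numerically but does not restore precedence, and an infeasible sequence says nothing about the chaining optimum.

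Here is a concrete failure. Take unary strings with $|T|=20$, $|Q|=23$, and anchors (as $(t,q)$ points) $c$ from $(1,1)$ to $(2,2)$, $a$ from $(2,3)$ to $(5,6)$, and $b$ from $(1,4)$ to $(20,23)$. Consider the path that takes one unsupported diagonal step to $(1,1)$, follows $c$, steps down to $(2,3)$, follows $a$, steps down twice to $(5,8)$, and follows $b$. It costs $4$, which is optimal: three indels are needed, and no anchor supports a diagonal step leaving column $t=0$. Your procedure returns $c,a,b$. But $\anchorstart(a)=(2,3)$ and $\anchorstart(b)=(1,4)$ are incomparable, so $a\not\prec b$. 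Overlapping anchors on a single diagonal cause the same failure for your greedy choice.

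\textbf{How to repair it.} The missing ingredient is that edit-graph distances are nondecreasing along diagonals: if $X\ge S$ and $\diag(X)=\diag(S)$, then $d_S\le d_X$.
\begin{itemize}
\item Proof of this fact: on any path to $X$, let $Y$ be the first point on the row or column of $S$. Going straight from $Y$ to $S$ costs $|\diag(S)-\diag(Y)|$. The rest of the path from $Y$ to $X$ costs at least $|\diag(X)-\diag(Y)|$, which is the same number, because each step shifts the diagonal by at most one and costs $1$ unless it is free.
\item Consequence: $d_P=d_{\anchorstart(a_i)}$ for every point $P$ on an anchor $a_i$.
\item Induction: along a $\prec$-topological order, prove $C[j]\le d_{\anchorstart(a_j)}$. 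On a shortest path to $S_j$, take the last free step, ending at $P$, and an anchor $a_i$ supporting it (or $\startanchor$ if the path has no free step).
\item Why the step closes: $\anchorstart(a_i)<P\le S_j$ gives $a_i\prec a_j$ automatically. Your sliding argument bounds $\connect(a_i,a_j)$ by the cost of the tail after $P$, which contains no free step. Finally $C[i]\le d_{\anchorstart(a_i)}=d_P$.
\end{itemize}

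This anchor-by-anchor construction re-chooses a shortest path to each selected start, instead of reading the whole chain off one fixed alignment. The paper does not prove the theorem itself; it imports it from Jain et al.\ and Rizzo et al. Its corollary $C[j]=d_{\anchorstart(a_j)}$ records that the equivalence holds at every anchor start, which is the per-anchor form the induction above establishes.
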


\parag{Contributions.}
Our main contribution is to connect the anchored edit distance to previously
studied chaining problems, resulting in an $O(n\log\log n)$ theoretical
algorithm working in $O(n)$ space.
Our method works by combining the $\linfty$ gap-cost computation in $O(n \log \log n)$ time of Eppstein et al.~\cite{EppsteinGGI92one} and Chao and Miller~\cite{DBLP:journals/algorithmica/ChaoM95} with the $\ddiag$ overlap cost computation of Baker and Giancarlo~\cite{BakerG02}.
The $O(n \log \log n)$ time complexity is obtained by using a fast integer sorting algorithm as well as predecessor structures.
A simplified $O(n \log n)$ time implementation, \texttt{llchain}, performs up to $10\times$ faster ($2.3\times$ on average) than the existing solutions for anchored edit distance, when chaining the maximal exact matches (MEMs) of length 50 or more between HiFi reads and a human genome.

\section{Related work}\label{sec:relatedwork}

\enlargethispage{1em}
\interfootnotelinepenalty=10000

To provide the needed historical context to our work and to help the reader navigate the literature, we start with a review of the most important results on pairwise chaining focusing on the time complexity in the number of input anchors, summarized in \Cref{tab:chaining}.

The original chaining formulations by Wilbur and Lipman~\cite{wilbur1984context}
considered as input exact-match fragments between two sequences $T$ and $Q$, to
be selected and connected together in an order that is consistent with an
alignment (i.e.\ a \emph{chain}) while maximizing the number of matched bases
minus a constant cost for each gap~\cite{wilbur1983rapid}, or maximizing \emph{weighted} matches minus an arbitrary gap
cost $g(\Delta_T, \Delta_Q)$~\cite{wilbur1984context}, where $\Delta_Q$ and
$\Delta_T$ are the lengths of the gaps in $T$ and $Q$ (see \Cref{fig:distances}). They
provided an $O(n^2)$ time algorithm.
In 1992, the first efficient sub-quadratic-time algorithms were developed by Eppstein et
al.~\cite{EppsteinGGI92one,EppsteinGGI92two} to maximize the number of matched bases minus gap costs computed as $f(\Delta_\text{diag})$, a one-dimensional function in the diagonal distance between consecutive fragments, and they run in $O(n \log \log n)$ or $O(n \log n)$ time, depending on whether $f$ is linear or concave.
Gap costs of the form $f(\Delta_\text{diag})$ are still widely used today~\cite{DBLP:journals/bioinformatics/Li18,DBLP:journals/ploscb/RenC21} but they do not explicitly penalize large gaps.
This was addressed in efficient algorithms in 1995\footnote{%
Zhang et al.~\cite{DBLP:journals/jcb/ZhangRHM94} define an affine gap score for chaining fragments between $k \ge 2$ sequences already in 1994, but they address the problem with $k$-dimensional trees and do not provide a worst-case complexity analysis.} by Myers and Miller~\cite{DBLP:conf/soda/MyersM95} when generalizing chaining to two or more sequences, and by Chao and Miller~\cite{DBLP:journals/algorithmica/ChaoM95} when extending the approach by Eppstein et al.\ to find $m$ best-scoring local alignments.
Myers and Miller's gap cost $\varepsilon \cdot \min(\Delta_T, \Delta_Q) +
\lambda \cdot \ddiag$
found a lot of attention in the alignment of whole
genomes~\cite{ohlebusch2006chaining}, whereas pairwise chaining implementations
like \texttt{CHAINER}~\cite{abouelhoda2004chainer} concentrated on the $\lone =
\Delta_T + \Delta_Q$ gap cost (i.e. $(\varepsilon,\lambda)=(2,1)$).
To the best of our knowledge, the first chaining implementation to support
generic $\varepsilon$ and $\lambda$ -- and thus $\linfty = \max(\Delta_T,
\Delta_Q)$ for $(\varepsilon, \lambda) = (1,1)$ -- is \texttt{clasp}~\cite{DBLP:journals/almob/OttoHGS11}.
All chaining formulations before 1998, and many formulations since, do not allow
anchors to overlap (see e.g.\ \Cref{fig:chain}): historically overlaps have been ignored, removed in a pre-processing step, or recovered in downstream refinement phases~\cite{DBLP:journals/bioinformatics/ChaoZOM95,DBLP:journals/bib/ChainKOS03}.
However, overlaps improve sensitivity and are widely employed in long-read
alignment, where the chaining formulation of
\texttt{minimap2}~\cite{DBLP:journals/bioinformatics/Li18} is the \emph{de
facto} standard way of chaining minimizer seeds, a popular class of sampled fixed-length seeds~\cite{sahlin2023survey}.
Other notable applications of chaining include the evaluation of genome assemblies~\cite{DBLP:journals/bioinformatics/BushmanovaALSP16,DBLP:journals/bioinformatics/MikheenkoPSAG18}, chaining with inversions~\cite{DBLP:journals/ploscb/RenC21}, and computing an admissible lower bound for the A*PA heuristic to sequence alignment~\cite{DBLP:journals/bioinformatics/KoerkampI24}.

\begin{figure}[t]
    \centering
    \includegraphics[]{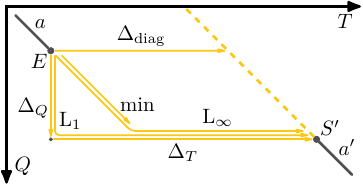}
    \caption{Different possible chaining costs between two anchors $a$ and $a'$:
    the distance is between the end $E$ of $a$ and start $S'$ of $a'$, and can
    be the distance in $Q$ ($\Delta_Q$), the distance in $T$ ($\Delta_T$), or the
    sum ($\lone$), minimum ($\min$), maximum ($\linfty$), or absolute difference
    ($\ddiag=\lvert \Delta_T-\Delta_Q \rvert$) of these.}\label{fig:distances}
\end{figure}

\enlargethispage{1em}

A complementary problem is that of \emph{anchored longest common subsequence} (anchored LCS), where
one searches an alignment that maximizes the number of matches supported by an anchor.
Baker and Giancarlo~\cite{BakerG02} showed in 1998 that chaining to minimize an $\lone=\Delta_T+\Delta_Q$ gap cost and a $\Delta_\text{diag}$ overlap cost actually solves the anchored LCS, and solved the problem in $O(n \log \log n)$ time.
The connection to LCS was rediscovered in 2021 by M\"akinen and 
Sahlin~\cite{DBLP:conf/cpm/MakinenS20} with an alternative but equivalent chaining formulation.
Further work in this direction includes LCS$k$ \cite{lcsk,lcsk-refined}, where anchors are the
set of all $k$-mers (exact matches of length $k$) and may not overlap, and LCS$k+$ \cite{lcsk++}, where
anchors are all matches of length at least $k$. In \cite{PaveticKMZS18}, an $O(n\log\ell)$ algorithm is given, where $\ell\leq n$ is the number of anchors in the optimal chain.

\begin{table}[h]
    \caption{Main algorithms and implementations (in \texttt{typewriter font}) for the global or semi-global chaining of $n$ fragments between two sequences $T$ and $Q$, with $Q$ the smaller string. For fragment types, ``exact'' refers to exact matches, ``square'' to matches of the same length, and ``inexact'' to approximate matches.
    For ``arbitrary'' fragment scores, each fragment has a score in input, whereas ``matches'' is equal to the number of matching bases in the corresponding alignment (also known as mutual coverage~\cite{DBLP:conf/cpm/MakinenS20}).
    For the gap costs, $f(\Delta_T,\Delta_Q)$ is an arbitrary 2D function, $\ell(\Delta_\text{diag})$ is a linear 1D, $f_\cup$ is a concave function, $p_\cup$ is a piece-wise function with a concave part flanked by constant values, $\varepsilon$ and $\lambda$ are linear substitution and indel costs, and $\linfty = \max(\Delta_T,\Delta_Q)$. Gap costs that have $\linfty$ as a special case are marked in yellow, and $\ddiag$ overlap costs in blue.}
    \label{tab:chaining}
    \footnotesize\centering\makebox[\textwidth][c]{%
    \begin{tabular}{ccccccccccc}
        \toprule
         chaining formulation & fragment type & fragment score & gap cost & overlap cost & time complexity \\
         \midrule
         Wilbur and Lipman, 1983~\cite{wilbur1983rapid} & exact & length & constant & -- & $O(n^2)$ \\
         \rowcolor{lipicsLightGray}
         Wilbur and Lipman, 1984~\cite{wilbur1984context} & square & context-dependent & \cellcolor{lipicsYellow!75!lipicsLightGray}$f(\Delta_T,\Delta_Q) \ge 0$ & -- & $O(n^2)$ \\
         Eppstein et al., 1992~\cite{EppsteinGGI92one} & exact & matches & $\ell(\Delta_\text{diag})$ & $\Delta_\text{diag} = 0$ penalty & $O(n \log \log n)$ \\
         \rowcolor{lipicsLightGray}
         Eppstein et al., 1992~\cite{EppsteinGGI92two} & exact & matches & $f_\cup(\Delta_\text{diag})$ & $\Delta_\text{diag} = 0$ penalty & $O(n \log n)$ \\
          & &  & \cellcolor{lipicsYellow!75}$\varepsilon\cdot\min(\Delta_T,\Delta_Q) \;+$  &  & \\
         \multirow{-2}{*}{Myers and Miller, 1995~\cite{DBLP:conf/soda/MyersM95}} & \multirow{-2}{*}{inexact} & \multirow{-2}{*}{arbitrary} & \cellcolor{lipicsYellow!75}$+ \lambda \cdot |\Delta_T - \Delta_Q|$ & \multirow{-2}{*}{--} & \multirow{-2}{*}{$O(n \log^2 n)$} \\
         \rowcolor{lipicsLightGray}
         Chao and Miller, 1995  \cite{DBLP:journals/algorithmica/ChaoM95} & exact & length & \cellcolor{lipicsYellow!75!lipicsLightGray}affine & -- & $O(n \log \log n)$ \\
         Baker and Giancarlo, 1998~\cite{BakerG02} & exact & -- & $L_1 = \Delta_T + \Delta_Q$ & \cellcolor{-lipicsYellow!50!black}$\Delta_{\text{diag}}$ & $O(n \log \log n)$\\
         \rowcolor{lipicsLightGray}
         Abouelhoda and Ohlebusch,  &  &  &  &  &  \\
         \rowcolor{lipicsLightGray}2004~\cite{abouelhoda2004chainer,DBLP:journals/jda/AbouelhodaO05} \texttt{CHAINER} & \multirow{-2}{*}{inexact} & \multirow{-2}{*}{arbitrary} & \multirow{-2}{*}{$L_1$} & \multirow{-2}{*}{--} & \multirow{-2}{*}{$O(n \log n)$} \\
         Abouelhoda and Ohlebusch, & & & \cellcolor{lipicsYellow!75}$\varepsilon \cdot \min(\Delta_T, \Delta_Q) \;+$ & & \\
         2005~\cite{DBLP:journals/jda/AbouelhodaO05} & \multirow{-2}{*}{inexact} & \multirow{-2}{*}{arbitrary} & \cellcolor{lipicsYellow!75}$+ \lambda \cdot |\Delta_T - \Delta_Q|$ & \multirow{-2}{*}{--} & \multirow{-2}{*}{$O(n \log n \log \log n)$} \\
         \rowcolor{lipicsLightGray}
         Otto et al., 2011~\cite{DBLP:journals/almob/OttoHGS11} & & & \cellcolor{lipicsYellow!75!lipicsLightGray}$\varepsilon\cdot\min(\Delta_T,\Delta_Q) \;+$ & & \\
         \rowcolor{lipicsLightGray} \texttt{clasp} & \multirow{-2}{*}{inexact} & \multirow{-2}{*}{arbitrary} & \cellcolor{lipicsYellow!75!lipicsLightGray}$+ \lambda \cdot |\Delta_T - \Delta_Q|$ & \multirow{-2}{*}{--} & \multirow{-2}{*}{$O(n \log^2  n)$} \\
         Bushmanova et al., 2016~\cite{DBLP:journals/bioinformatics/BushmanovaALSP16} & & & & & \\
         \texttt{rnaQUAST} & \multirow{-2}{*}{inexact} & \multirow{-2}{*}{arbitrary} & \multirow{-2}{*}{--} & \multirow{-2}{*}{$\ov_T$} & \multirow{-2}{*}{$O(n^2)$} \\
         \rowcolor{lipicsLightGray}%
         Li, 2018~\cite{DBLP:journals/bioinformatics/Li18} & fixed-length, & & & & \\
         \rowcolor{lipicsLightGray}%
         \texttt{minimap2} & exact & \multirow{-2}{*}{matches} & \multirow{-2}{*}{$p_\cup(\Delta_\text{diag})$} & \multirow{-2}{*}{$p_\cup(\Delta_\text{diag})$} & \multirow{-2}{*}{$O(n^2)$} \\
         Sahlin and M\"akinen, 2020~\cite{DBLP:conf/cpm/MakinenS20} & exact & matches & 0 & 0 & $O(n \log n)$ \\
         \rowcolor{lipicsLightGray}%
         Ren and Chaisson, 2021~\cite{DBLP:journals/ploscb/RenC21} & & & & & \\
         \rowcolor{lipicsLightGray}%
         \texttt{lra} & \multirow{-2}{*}{inexact} & \multirow{-2}{*}{arbitrary} & \multirow{-2}{*}{$f_\cup(\Delta_\text{diag})$} & \multirow{-2}{*}{--} & \multirow{-2}{*}{$O(n \log^2 n)$} \\
         Sahlin and M\"akinen, & & mutual, weighted & & & \\
         2021~\cite{DBLP:conf/cpm/MakinenS20,DBLP:journals/bioinformatics/SahlinM21} \texttt{uLTRA} & \multirow{-2}{*}{inexact} & coverage & \multirow{-2}{*}{$\Delta_Q$} & \multirow{-2}{*}{$\varepsilon \cdot \max(\text{ov}_T, \text{ov}_Q)$} & \multirow{-2}{*}{$O(n^2)$} \\
         \rowcolor{lipicsLightGray}%
         Jain et al., 2022~\cite{JainGT22,RizzoCM25} & exact & -- & \cellcolor{lipicsYellow!75!lipicsLightGray} $\linfty$ & \cellcolor{-lipicsYellow!70!black}$\Delta_\text{diag}$ & $O(n \log^3 n)$ \\
         Groot Koerkamp and Ivanov, & non-overlapping, & & & & \\
         2024~\cite{DBLP:journals/bioinformatics/KoerkampI24} \texttt{A*PA} & inexact & \multirow{-2}{*}{edit distance} & \multirow{-2}{*}{$\max(\Delta_\text{diag},\lfloor\Delta_T/k\rfloor)$} & \multirow{-2}{*}{--} & \multirow{-2}{*}{$O(n \log^2 n)$} \\
         \rowcolor{lipicsLightGray}%
         \textbf{this work},  Alg 1 & exact & -- & \cellcolor{lipicsYellow!75!lipicsLightGray} $\linfty$ & \cellcolor{-lipicsYellow!70!black} $\Delta_\text{diag}$ & $O(n \log \log n)$ \\
         \textbf{this work}, \texttt{llchain} & exact & -- & \cellcolor{lipicsYellow!75} $\linfty$ & \cellcolor{-lipicsYellow!50!black} $\Delta_\text{diag}$ & $O(n \log n)$ \\
         \bottomrule
    \end{tabular}}
\end{table}

\section{Preliminaries on chaining, range minimum queries, priority queues}
\label{sec:preliminaries}
\enlargethispage{1em}
We adapt the notation from Rizzo, C\'aceres, and M\"akinen~\cite{RizzoCM25} to right-open intervals and 0-indexed strings.
We denote integer interval $\lbrace x, x+1, \dots, y \rbrace$ as $[x..y]$, or
$[x..y+1)$.
  Given $0$-indexed strings $T$ and $Q$ over finite alphabet $\Sigma$, interval pair $a = ([\tbeg..\tend), [\qbeg..\qend))$ is an \emph{exact match anchor} (or fragment, or just anchor) between $T$ and $Q$ if $T[\tbeg..\tend) = Q[\qbeg..\qend)$,
where $0\leq \tbeg \leq \tend \leq |T|$ and $0\leq \qbeg \leq \qend \leq |Q|$
(empty anchors are used as sentinels at the start and end of the text).
The main invariant of exact match anchors that we will sometimes use is $\tend - \tbeg = \qend - \qbeg \iff \tend - \qend = \tbeg - \qbeg$.

A colinear chaining formulation is first defined via \emph{anchor precedence}, a partial order $\prec$ specifying which anchors can follow one another to compose a chain $a_1 \prec \dots\prec a_c$.
One then maximizes the \emph{score} of a chain.
Indeed, all chaining formulations in \Cref{tab:chaining} maximize a formula of the type
$
    \score(A) = \sum_{i=1}^{c} \fscore(a_i) - \sum_{i=1}^{c-1} \connect(a_i, a_{i+1}),
$
where $\fscore(a)$ is the non-negative fragment score obtained by picking anchor $a$, and $\connect(a_i,a_{i+1})$ is the non-negative penalty (cost) of choosing $a_{i+1}$ after $a_i$.
Since our chaining formulation will have no fragment score, our goal will be to minimize a function which is very similar to $\cost(A) = \sum_{i=1}^{c-1} \connect(a_i, a_{i+1})$, the cost of connecting the anchors in a chain.

\subsection{Full problem statement}
Consider the following definitions alongside \Cref{fig:chain}.

\begin{definition}[Anchor precedence and colinear chain~\cite{BakerG02,JainGT22}]\label{def:precedence}
  Given $a = ([\tbeg..\tend), [\qbeg..\qend))$ between $T$ and $Q$, let $\anchorstart(a) = (\tbeg,\qbeg)$ be the start-point of $a$.
  Then $a$ \emph{precedes} $a' = ([\tbeg'..\tend'),[\qbeg'..\qend'))$, in symbols $a \prec a'$, if $\anchorstart(a)$ is distinct from and reaches $\anchorstart(a')$ in a possible alignment, that is,
    \[
    a \prec a'
    \quad\text{iff}\quad
        \tbeg \le \tbeg' \text{ and }
        \qbeg \le \qbeg' \text{ and }
        (\tbeg, \qbeg) \neq (\tbeg', \qbeg').
    \]
    Then, a sequence of anchors $a_1, \dots, a_c$ is called a \emph{colinear chain} or just $\prec$-chain if $a_i \prec a_{i+1}$ for all $i \in [1..c)$.
    Note that relation $\prec$ is referred to as \emph{weak precedence} in \cite{JainGT22}.
\end{definition}

\begin{definition}[{Connect function~\cite{JainGT22}, \cite[Lemma 1]{RizzoCM25}}]
    \label{def:connect}
    Given anchors $a \prec a'$, we define the following quantities (some shown in \cref{fig:distances}):
    \begin{align*}
\Delta_T(a,a') &:= \tbeg' - \tend \quad\text{and}\quad \Delta_Q(a,a') := \qbeg' - \qend  &\text{$T$-gap and $Q$-gap}\\
\linfty(a,a') &:= \max \big(0, \Delta_T(a, a'), \Delta_Q(a, a') \big) &\text{max-gap}\\
\ov_T(a, a') &:= \tend - \tbeg' \quad\text{and}\quad \ov_Q(a, a') := \qend - \qbeg' & \text{$T$-overlap and $Q$-overlap}\\
\ddiag(a, a') &:= |\Delta_T(a,a') - \Delta_Q(a, a')| = |\diag(a) - \diag(a')| & \text{diagonal distance}
    \end{align*}
where $\diag(a) = \tbeg - \qbeg$ is the diagonal of $a$.
    Then the cost of connecting two anchors is obtained as the maximum of the $\linfty$ gap cost and the $\Delta_{\diag}$ overlap cost:
    \[
        \connect(a,a') = \max \!\big( \!\linfty(a,a'), \ddiag(a,a') \big) = \begin{cases}
            \linfty(a, a') &\text{if}\;(\tend,\qend)\preceq (\tbeg',\qbeg') \;\text{(gap-gap)},\\
            \ddiag(a,a') &\text{otherwise} \;\text{(overlap case).}
        \end{cases}
    \]
    \end{definition}

    \begin{definition}[Global and semi-global cost~\cite{JainGT22}]\label{def:cost}
    Let $A = a_1, \dots, a_c$ be a $\prec$-chain of anchors between $T$ and $Q$.
    We define the \emph{global (alignment) cost} of $A$ as $\gcost(A) = \sum_{i=0}^{c} \connect(a_i, a_{i+1})$,
    where dummy anchors $a_0 = \startanchor = ([0,0),[0,0))$ and $a_{c+1} = \finalanchor = ([\lvert T \rvert, \lvert T \rvert), [\lvert Q \rvert, \lvert Q \rvert))$ indicate the start and end of the alignment.
    The \emph{semi-global cost} $\sgcost(A)$ ignores the initial and final gap in $T$, that is,
    \[
        \sgcost(A) = \Delta_Q(\startanchor,a_1) + \bigg(\sum_{i=1}^{c-1} \connect(a_{i}, a_{i+1}) \bigg) + \Delta_Q(a_c,\finalanchor).
    \]
\end{definition}

\begin{figure}[t]
    \quad
    \begin{minipage}[c]{0.65\linewidth}
        \raggedright
            \begin{tikzpicture}[remember picture]
\newcommand{\stripedunderline}[2]{%
\tikz\node[inner sep=2pt,minimum height=.75ex,minimum width=2ex,rectangle,overlay,shift={(-.5ex,-1.25ex)},fill=#2] (#1) {};%
}
\newcommand{\solidunderline}[2]{%
\tikz\node[inner sep=2pt,minimum height=.75ex,minimum width=2ex,rectangle,overlay,shift={(-.5ex,-1.25ex)},pattern={Lines[angle=45,distance=2pt,line width=1pt]},pattern color=#2] (#1) {};%
}
    \tikzset{
		invisible/.style = {minimum size=0pt, inner sep=0pt}
	}
\matrix[
	matrix of nodes,
	minimum size=2mm,
	column sep={3.5mm,between origins},
	row sep={3mm,between origins},
	inner sep=0pt,
	nodes={anchor=base},
	anchor=west,
	font=\ttfamily,
] (Q) {
A&C&A&T&C&T&G&C&C&A&A&C&A&T&A&T&C&C\\
};

\matrix[
	matrix of nodes,
	minimum size=2mm,
	column sep={3.5mm,between origins},
	row sep={3mm,between origins},
	inner sep=0pt,
	nodes={anchor=base},
	anchor=west,
	font=\ttfamily,
] (T) at (0,-1cm) {
A&C&A&T&C&C&G&G&C&C&A&T&A&T&A&T&C&C\\
};

\draw[-,very thick, colorA] ($(Q-1-2.north west) + (0,.05cm)$) -- node[invisible] (Qanchor0) {} ($(Q-1-5.north east) + (0,.05cm)$);
\draw[-,very thick, colorA] ($(T-1-2.north west) + (0,.05cm)$) -- node[invisible] (Tanchor0) {} ($(T-1-5.north east) + (0,.05cm)$);
\draw[-,very thick, colorB] ($(Q-1-7.south west) + (0,-.05cm)$) -- node[invisible] (Qanchor1) {} ($(Q-1-10.south east) + (0,-.05cm)$);
\draw[-,very thick, colorB] ($(T-1-8.south west) + (0,-.05cm)$) -- node[invisible] (Tanchor1) {} ($(T-1-11.south east) + (0,-.05cm)$);
\draw[-,very thick, colorC] ($(Q-1-12.north west) + (0,.05cm)$) -- node[invisible] (Qanchor2) {} ($(Q-1-16.north east) + (0,.05cm)$);
\draw[-,very thick, colorC] ($(T-1-10.north west) + (0,.05cm)$) -- node[invisible] (Tanchor2) {} ($(T-1-14.north east) + (0,.05cm)$);
\draw[-,very thick, colorD] ($(Q-1-13.south west) + (0,-.05cm)$) -- node[invisible] (Qanchor3) {} ($(Q-1-17.south east) + (0,-.05cm)$);
\draw[-,very thick, colorD] ($(T-1-13.south west) + (0,-.05cm)$) -- node[invisible] (Tanchor3) {} ($(T-1-17.south east) + (0,-.05cm)$);
\draw[-,colorA] (Qanchor0) to node[pos=0.7] (a1) {\backgroundcontour{$a_1$}} (Tanchor0);
\node[colorAdark] at (a1) {$a_1$};
\draw[-,colorB] (Qanchor1) to node[pos=0.3] (a2) {\backgroundcontour{$a_2$}} (Tanchor1);
\node[colorBdark] at (a2) {$a_2$};
\draw[-,colorC] (Qanchor2) to node[pos=0.7] (a3) {\backgroundcontour{$a_3$}} (Tanchor2);
\node[colorCdark] at (a3) {$a_3$};
\draw[-,colorD] (Qanchor3) to node[pos=0.3] (a4) {\backgroundcontour{$a_4$}} (Tanchor3);
\node[colorDdark] at (a4) {$a_4$};
\begin{scope}[on background layer]
\draw[draw=none,fill=colorA!50] ($(Q-1-1.north west)$) rectangle ($(Q-1-1.south east)!0.5!(Q-1-2.south west)$);
\draw[draw=none,fill=colorA!50] ($(T-1-1.north west)$) rectangle ($(T-1-1.south east)!0.5!(T-1-2.south west)$);
\draw[draw=none,fill=colorB!50] ($(Q-1-6.north west)!0.5!(Q-1-5.north east)$) rectangle ($(Q-1-6.south east)!0.5!(Q-1-7.south west)$);
\draw[draw=none,fill=colorB!50] ($(T-1-6.north west)!0.5!(T-1-5.north east)$) rectangle ($(T-1-7.south east)!0.5!(T-1-8.south west)$);
\draw[draw=none,fill=colorC!50] ($(Q-1-11.north west)!0.5!(Q-1-10.north east)$) rectangle ($(Q-1-11.south east)!0.5!(Q-1-12.south west)$);
\draw[draw=none,pattern={Lines[angle=45,distance=2pt,line width=1pt]},pattern color=colorC!50] ($(T-1-10.north west)!0.5!(T-1-9.north east)$) rectangle ($ (T-1-12.south east)!0.5!(T-1-11.south west) $);
\draw[draw=none,pattern={Lines[angle=45,distance=2pt,line width=1pt]},pattern color=colorD!50] ($(Q-1-13.north west)!0.5!(Q-1-12.north east)$) rectangle ($ (Q-1-17.south east)!0.5!(Q-1-16.south west) $);
\draw[draw=none,pattern={Lines[angle=45,distance=2pt,line width=1pt]},pattern color=colorD!50] ($(T-1-13.north west)!0.5!(T-1-12.north east)$) rectangle ($ (T-1-15.south east)!0.5!(T-1-14.south west) $);
\draw[draw=none,fill=lipicsLightGray] ($(Q-1-17.north east)!0.5!(Q-1-18.north west)$) rectangle ($(Q-1-18.south east)$);
\draw[draw=none,fill=lipicsLightGray] ($(T-1-17.north east)!0.5!(T-1-18.north west)$) rectangle ($(T-1-18.south east)$);
\end{scope}
    \node[anchor=west] at (0,-2cm) (cost) {$\max(\stripedunderline{u1}{colorA!50} 1,\, \stripedunderline{u2}{colorA!50} 1) + \max(\stripedunderline{u3}{colorB!50} 1, \,\stripedunderline{u4}{colorB!50} 2) + \lvert \stripedunderline{u5}{colorC!50} 1 - (-\solidunderline{u6}{colorC!50} 2) \rvert \;+$};
    \node[anchor=west] at (0,-2.75cm) {$\quad+\; \lvert - \solidunderline{u7}{colorD!50} 4 - (-\solidunderline{u8}{colorD!50} 2) \rvert + \max(\stripedunderline{u9}{lipicsLightGray} 1, \, \stripedunderline{u10}{lipicsLightGray} 1) = 9$};
	\node[left] at (Q.west) {$T=$};
	\node[left] at (T.west) {$Q=$};
	\node[left] at (cost.west) {$\gcost(A)=$};

\foreach \i[evaluate=\i as \ii using int(\i+1)] in {1,5,6,10,11,12,16,17}
{
	\node[anchor=base] at ($ (Q-1-\ii.north) + (0,2ex) $) {\scriptsize \i};
}
\foreach \i[evaluate=\i as \ii using int(\i+1)] in {1,5,7,9,11,12,14,17}
{
	\node[anchor=base] at ($ (T-1-\ii.south) - (0,2.5ex) $) {\scriptsize \i};
}
\end{tikzpicture}%
    \end{minipage}
    \hfill
    \begin{minipage}[c]{0.25\linewidth}
        \raggedleft
        \includesvg[width=\linewidth]{edit-graph}
    \end{minipage}
    \quad
    \quad
    \caption{Example of a chain $A = a_1, a_2, a_3,
      a_4$ of cost $\gcost(A) =
      9$ (see \Cref{def:connect,def:cost}), where on the left the gaps and overlaps between consecutive anchors are marked
      with solid and striped background, respectively. On the right, a possible
      alignment in the corresponding edit graph is shown that uses a prefix of
      each anchor. Skipping $a_3$
      would reduce the cost from $9$ to $6$.}\label{fig:chain}
\end{figure}

\begin{problem}[Colinear chaining~\cite{JainGT22}]\label{prob:CLC}
  Given a set $\mathcal{A}$ of $n$ exact match anchors,
  the \emph{colinear chaining problem with $\linfty$ gap costs and $\Delta_{\diag}$ overlap costs} consists of finding an ordered subset $A = a_1$, \dots, $a_c$ of $\mathcal{A}$ that is a $\prec$-chain and has minimum global cost $\gcost(A)$.
    The \emph{semi-global} variant of the problem minimizes $\sgcost(A)$ instead.
\end{problem}

\subsection{Variants, algorithms, and anchored edit distance}
\enlargethispage{2\baselineskip}
Jain, Gibney, and Thankachan~\cite{JainGT22} introduced and studied efficient solutions to \Cref{prob:CLC}, as well as variants using different precedence formulations, which we will briefly discuss now.
While the definition of non-overlapping anchors has been consistent in the literature, overlaps can be handled in multiple ways:
for example, $([1..3), [1..3)), ([2..3), [5..6))$ is a $\prec$-chain but it aligns $T[2..3)$ to two different regions of $Q$, raising doubts about the usefulness of allowing this case.
Luckily, the colinear chaining problems under $\prec$ and under the following formulations are equivalent,\footnote{We choose then to concentrate on $\prec$ and not one of its variants because the algorithmic results will be easier to prove under this precedence (see \Cref{lem:overlapcase1a} and Baker and Giancarlo~\cite{BakerG02}).} and in turn they all compute the anchored edit distance.

\begin{definition}[Alternative precedence formulations~\cite{JainGT22,RizzoCM25}]
    \label{def:alternativeprecedence}
    Anchor $a = ([\tbeg..\tend), [\qbeg..\qend))$ \emph{strictly precedes} $a' = ([\tbeg'..\tend'), [\qbeg'..\qend'))$, in symbols $a \sprec a'$, if $a\neq a'$ and $\tbeg \le \tbeg'$, $\tend \le \tend'$, $\qbeg \le \qbeg'$, $\qend \le \qend'$.
    Also, we say that $a$ \emph{strongly precedes} $a'$, in symbols $a \chainxprec a'$, if $\tbeg < \tbeg'$, $\tend < \tend'$, $\qbeg < \qbeg'$, and $\qend < \qend'$.
\end{definition}

\begin{theorem}[{\cite[Theorem 2]{JainGT22} and \cite[Theorem 3]{RizzoCM25arxiv}}]
    \label{theo:anchorededitdistance}
    For a fixed set of anchors $\mathcal{A}$, the following quantities are equal: the anchored edit distance (\Cref{def:aed}) and the optimal $\prec$-chain, $\sprec$-chain, and $\chainxprec$-chain global cost ($\gcost$, \Cref{def:cost}).
    Similarly, the semi-global variant of anchored edit distance (i.e.\ initial and final $T$-gaps cost 0) is equal to the minimum semi-global cost $\sgcost(A)$ of an optimal $\prec$-, $\sprec$-, or $\chainxprec$-chain $A$.
\end{theorem}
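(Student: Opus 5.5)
The plan is to prove a cycle of inequalities. Let $D$ be the anchored edit distance and $C_\prec$, $C_{\mathrm{s}}$, $C_{\mathtt{ChainX}}$ the optimal global chain costs under the three precedences. Since $a \chainxprec a'$ implies $a \sprec a'$, and $a \sprec a'$ implies $a \prec a'$ (the start-points differ, since otherwise the end-points coincide too and $a=a'$), every $\chainxprec$-chain is an $\sprec$-chain and every $\sprec$-chain is a $\prec$-chain. Hence $C_\prec \le C_{\mathrm{s}} \le C_{\mathtt{ChainX}}$, and it suffices to show $D \le C_\prec$ and $C_{\mathtt{ChainX}} \le D$.

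\emph{Step 1 ($D \le C_\prec$).} Given a $\prec$-chain $a_0,\dots,a_{c+1}$ with the sentinels, we build an alignment whose cost is at most $\gcost$, following \Cref{fig:chain}. Walk the anchors in order and keep a current point $P$ of the edit graph, starting at $(0,0)$. For the next anchor $a'$, consider two cases. If $P \preceq \anchorstart(a')$ (the gap-gap case), move from $P$ to $\anchorstart(a')$ using $\min(\Delta_T,\Delta_Q)$ substitutions and $|\Delta_T-\Delta_Q|$ indels, which costs exactly $\linfty$. Otherwise (the overlap case), move along the diagonal of the previous anchor, for free, until leaving it; then use $\ddiag$ indels to reach the diagonal of $a'$ at a point dominating $P$; then follow $a'$ diagonally, again for free. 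Because $\anchorstart(a)\le\anchorstart(a')$ coordinatewise, the point where we switch diagonals lies inside the diagonal span of $a'$ or beyond it. I must check this carefully: when we overshoot $a'$ entirely, the indels still charge only $\ddiag$, and the extra distance is paid later by the next connection. Keeping the invariant that $P$ always lies on the diagonal of the last anchor at or before its end, the total cost telescopes to at most $\gcost$.

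\emph{Step 2 ($C_{\mathtt{ChainX}} \le D$).} Take an optimal anchored alignment and decompose its supported matches into maximal runs of consecutive diagonal steps, each run supported by one anchor. The cost between two runs is at least $\max(\Delta_T,\Delta_Q)$ over the gap. Replace each run by its supporting anchor, and choose representatives so that consecutive anchors are strongly ordered. Where anchors would violate $\chainxprec$ (for example one anchor nested in another, or two anchors sharing an endpoint coordinate), merge or drop the offending ones. Then show $\connect$ between the chosen anchors is at most the alignment cost between the runs: in the gap case $\linfty$ is a lower bound on edit cost, and in the overlap case the diagonal shift $\ddiag$ must be paid by indels. \textbf{This is the main obstacle}: showing that every strict-precedence violation can be repaired by discarding anchors without increasing cost, using the triangle inequality for $\connect$ along diagonals. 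I would first try choosing, for each run, the anchor with the smallest start, and then argue by exchange.

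The semi-global statement follows by the same argument, with the first and last connections replaced by $\Delta_Q$ (free leading and trailing $T$-gaps) in both directions.
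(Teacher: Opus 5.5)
\textbf{There is a genuine gap: the direction $C_{\chainxprec}\le D$ is not proved, and the tools you propose for it do not work.} Your cycle of inequalities is a sound skeleton.

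Reading anchors off the runs of one fixed optimal alignment can give a sequence that is not a chain, even when every run has a unique supporting anchor. Take unary strings with $|T|=30$, $|Q|=35$, and anchors $c=([0..5),[0..5))$, $a=([5..10),[7..12))$, $b=([0..30),[5..35))$. The path that follows $c$, takes two vertical steps to $\anchorstart(a)$, follows $a$, takes three vertical steps onto $b$, and then follows $b$ costs $5$. It is optimal by the diagonal lower bound. Yet $\anchorstart(b)=(0,5)$ does not dominate $\anchorstart(a)=(5,7)$. Dropping $a$ keeps cost $5$, while dropping $b$ gives $25$, so you need an actual rule. Merging is not an option, since a chain is an ordered subset of $\mathcal{A}$. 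Nor does $\connect$ satisfy a general triangle inequality: if $\anchorend(a)=\anchorstart(b)$ and $\anchorend(b)=\anchorstart(c)$, then $\connect(a,b)+\connect(b,c)=0$ while $\connect(a,c)$ is the length of $b$.

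The missing idea is to argue with shortest paths to start-points rather than with one fixed alignment. This is the route reflected in \Cref{cor:pgcost}, which the paper notes falls out of the cited proof.
\begin{itemize}
\item First, the edit-graph distance is constant along an anchor: $d_X=d_{\anchorstart(a)}$ for every point $X$ of $a$. The direction ``$\le$'' is free. For ``$\ge$'', let $Z$ be the last vertex with $Z\preceq\anchorstart(a)$ on a shortest path to $X$. Then $Z$ shares a coordinate with $\anchorstart(a)$, so walking straight from $Z$ to $\anchorstart(a)$ costs $|\diag(Z)-\diag(a)|$, which lower-bounds the rest of the path.
\item Second, prove $C[b]\le d_{\anchorstart(b)}$ (with $C$ as in \Cref{eq:DP}) by induction over start-points. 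If a shortest path to $\anchorstart(b)$ has no free edge, it costs at least $\connect(\startanchor,b)$. Otherwise let $a$ support its last free edge, ending at $X$. Since $\anchorstart(a)\preceq X\preceq\anchorstart(b)$ and $X\neq\anchorstart(a)$, you get $a\prec b$ for free. The free-edge-less suffix from $X$ costs at least $\max(0,\tbeg^b-\tend^a,\qbeg^b-\qend^a)$ (because $X\preceq\anchorend(a)$) and at least $\ddiag(a,b)$, hence at least $\connect(a,b)$. So $C[b]\le C[a]+\connect(a,b)\le d_X+(\text{suffix cost})=d_{\anchorstart(b)}$. Applied to $\finalanchor$, this gives $C_\prec\le D$.
\item Only then does your repair idea become easy, on chains rather than runs. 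In an optimal $\prec$-chain, if consecutive $a\prec b$ share $\tbeg$ or $\qbeg$, delete $a$. Then $\connect(p,b)\le\connect(p,a)+\connect(a,b)$, because $\anchorstart(b)$ exceeds $\anchorstart(a)$ by $\ddiag(a,b)$ in a single coordinate. If $\tend^b\le\tend^a$ or $\qend^b\le\qend^a$, delete $b$. Then $\connect(a,c)\le\connect(a,b)+\connect(b,c)$, because $\anchorend(a)$ dominates $\anchorend(b)$ up to $\ddiag(a,b)$ in the other coordinate. Precedence is preserved, and when neither rule applies, $a\chainxprec b$. Hence $C_{\chainxprec}\le C_\prec$.
\end{itemize}

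There are two smaller errors. First, your justification that $\sprec$ implies $\prec$ is wrong: anchors with the same start-point can have different lengths. For example, $([0..1),[0..1))\sprec([0..2),[0..2))$, but the two are not $\prec$-related. This is harmless: delete the shorter anchor, which changes no $\connect$ value, or apply Step~1 directly to $\sprec$-chains, since it only uses weakly ordered start-points. Second, in Step~1, following the previous anchor until you leave it and then jumping by $\ddiag$ to a point dominating $P$ can overshoot. With $T=Q=\mathtt{A}^{10}$, $a=([0..8),[0..8))$ and $a'=([5..10),[0..5))$, we have $a\prec a'$ and $\connect(a,a')=5$. But every point of diagonal $5$ dominating $\anchorend(a)=(8,8)$ has $T$-coordinate at least $13>|T|$, and ``paid later by the next connection'' is not an argument. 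Go from start-point to start-point instead. From $\anchorstart(a)$, take $k=\min(\tend-\tbeg,\,\tbeg'-\tbeg,\,\qbeg'-\qbeg)$ free steps along $a$, then walk straight to $\anchorstart(a')$. This hop costs $\linfty(a,a')$ if $k$ is the whole length of $a$, and $\ddiag(a,a')$ otherwise, so it is at most $\connect(a,a')$ and the bound telescopes. This is the prefix-based alignment of \Cref{fig:chain}.
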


There are already several dynamic-programming solutions to \Cref{prob:CLC} and its variants.
Note how each precedence formulation (\Cref{def:precedence,def:alternativeprecedence}) defines a partial order on the input anchor set $\mathcal{A}$: consistently with the literature~\cite{wilbur1984context}, the chaining algorithms consider $\mathcal{A}$ to be in a topological order.
All solutions, including ours, compute values $C[j]$, the optimal partial costs
$\pgcost(A) = \sum_{i=0}^{c-1} \connect(a_i, a_{i+1})$ over all chains $A =
(a_0, a_1, \dots, a_c)$
starting in $a_0 = \startanchor$ and ending in anchor $j$ (and thus not yet connecting to the final anchor $\finalanchor$).
Due to the additivity of the $\gcost$ function, values $C[j]$ can be computed in $O(n^2)$ time via dynamic programming (DP) as $C[0]=0$ and
    \begin{align}
        C[j] &= \min_{i \in [0..j) : a_i \prec a_j} C[i] + \connect(a_i,a_j) &\forall j \in [1..n+1],\label{eq:DP}
    \end{align}
and it is easy to see that $C[n+1]$ is the minimum $\gcost$ of a $\prec$-chain.
By changing how 
$a_0$ connects to all other anchors, a variant of \Cref{eq:DP} also solves semi-global chaining.

Jain et al.\ also introduced sub-quadratic time solutions and a practical heuristic.
\begin{theorem}[{\cite[Theorem 1]{JainGT22}}]
    \label{theo:chaining}
    The colinear chaining problem with $\linfty$ gap costs and $\ddiag$ overlap costs (\Cref{prob:CLC}) can be solved in time:
    \begin{itemize}
    \item $O(n \log^2 n)$ for $\sprec$-chains with fixed-length anchors;
    \item $O(n \log^3 n)$ for $\prec$-chains; and
    \item $O(n \log^4 n)$ for $\sprec$-chains (which are now known to be equivalent to $\prec$-chains).
    \end{itemize}
\end{theorem}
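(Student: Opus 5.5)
The statement is Theorem 1 of Jain, Gibney, and Thankachan. I would prove it by accelerating the recurrence in \Cref{eq:DP} with geometric range-minimum structures.

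\textbf{Setup.} Process the anchors in a topological order of $\prec$, for instance sorted by $(\tbeg,\qbeg)$. When anchor $a_j$ is processed, every predecessor $a_i \prec a_j$ has already been inserted into a dynamic orthogonal range-minimum structure together with its value $C[i]$.

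\textbf{Case split.} Following \Cref{def:connect}, the minimum over predecessors splits into cases, each of which makes the connect cost linear in the coordinates of $a_i$ plus a term depending only on $a_j$.
\begin{itemize}
\item In the \emph{gap-gap} case, $\tend\le\tbeg'$ and $\qend\le\qbeg'$. Split further on whether $\Delta_T\ge\Delta_Q$, that is, on the sign of $\diag(E)-\diag(S')$.
  \begin{itemize}
  \item If $\Delta_T\ge\Delta_Q$, the cost is $\tbeg'-\tend$, so we query $\min (C[i]-\tend)$.
  \item Otherwise we query $\min(C[i]-\qend)$.
  \end{itemize}
\item In the \emph{overlap} case, the cost is $|\diag(a_i)-\diag(a_j)|$. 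Split on the sign of this difference and query $\min(C[i]\mp\diag(a_i))$.
\end{itemize}

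\textbf{Counting dimensions.} Each subcase is a conjunction of linear inequalities.
\begin{itemize}
\item The precedence constraints are $\tbeg\le\tbeg'$ and $\qbeg\le\qbeg'$.
\item The gap or overlap constraints are on $\tend$ versus $\tbeg'$ and $\qend$ versus $\qbeg'$. The overlap case is the complement of the gap case, which is a union of two orthants and can be split into disjoint pieces.
\item The sign constraint is on the diagonal.
\end{itemize}
One of the sweep dimensions is absorbed by the processing order. The remaining constraints form an orthogonal range query in at most $d$ dimensions over points derived from the anchors.

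\textbf{Complexity of the general case.} A semi-dynamic range tree with fractional cascading (insertions only, minimum queries) answers a $d$-dimensional dominance-minimum query in $O(\log^{d} n)$ time, or $O(\log^{d-1} n)$ time with cascading.

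For exact-match anchors, $\tend-\qend=\tbeg-\qbeg$. Therefore the diagonal constraint and one end constraint determine the other, which eliminates a dimension. I expect that after this reduction each subcase needs at most three nontrivial query dimensions. This gives $O(n\log^3 n)$ time for $\prec$-chains.

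\textbf{The two special cases.}
\begin{itemize}
\item \textbf{Fixed-length anchors under $\sprec$.} The end coordinates are the start coordinates shifted by a constant. Several constraints then coincide, so one fewer dimension is needed, which gives $O(n\log^2 n)$.
\item \textbf{General anchors under $\sprec$.} The constraints $\tend\le\tend'$ and $\qend\le\qend'$ add a further dimension, which gives $O(n\log^4 n)$.
\end{itemize}

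\textbf{Main obstacle.} The hardest step is the careful case decomposition of the overlap region. The complement of the gap orthant must be expressed as a disjoint union of orthogonal boxes, and we must verify that the diagonal-sign split interacts correctly with the invariant so that the dimension count is really three and not four. This must also handle ties such as $\tend=\tbeg'$ without double counting. Correctness of taking a minimum over overlapping cases is harmless, since $\connect$ is the maximum of its two terms. I would check this decomposition first by enumerating the sign patterns of $\ov_T$, $\ov_Q$ and $\diag(a_i)-\diag(a_j)$.
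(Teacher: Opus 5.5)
The paper does not prove this statement; it quotes it from Jain et al.\ and only remarks that the values $C[j]$ come from dynamic multi-dimensional range-minimum queries, which is the route you take. The gap is that the step producing the exponents $3$, $4$ and $2$ is only announced (``I expect\dots''), and the decomposition you outline does not deliver it by itself. A subcase for $\prec$ involves the five coordinates $\tbeg^i,\qbeg^i,\tend^i,\qend^i,\diag(a_i)$. Cutting the complement of the gap-gap orthant into disjoint boxes yields a piece constrained by $\tend^i\le\tbeg^j$, $\qend^i>\qbeg^j$, $\qbeg^i\le\qbeg^j$ and the diagonal sign. That is four query coordinates after the sweep, hence only $O(n\log^4 n)$.

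The missing observation is that, once the diagonal sign is fixed, the invariant $\tend-\qend=\tbeg-\qbeg$ collapses every subcase; this is exactly the split in \Cref{eq:fourcases}. If $\diag(a_i)\ge\diag(a_j)$, then $\qend^i>\qbeg^j$ forces $\tend^i>\tbeg^j$. So ``not gap-gap'' is the single condition $\ov_T(a_i,a_j)>0$, no box decomposition is needed, and $\qbeg^i\le\qbeg^j$ follows from $\tbeg^i\le\tbeg^j$ (\Cref{subsec:overlaps}). Symmetrically, when $\diag(a_i)<\diag(a_j)$ the single condition $\ov_Q(a_i,a_j)>0$ suffices, and $\tbeg^i\le\tbeg^j$ is implied. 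In the gap-gap cases, $\tend^i\le\tbeg^j$ (resp.\ $\qend^i\le\qbeg^j$) together with the sign implies every other constraint (\Cref{sub:biggerQgap}).

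Hence cases 2, 3 and 1a need two query coordinates beyond the $\tbeg$-sweep. Case 1b needs three ($\qbeg^i$, $\qend^i$, $\diag(a_i)$), giving $O(\log^3 n)$ per operation. For $\sprec$, the constraints $\tend^i\le\tend^j$ and $\qend^i\le\qend^j$ are implied in the gap-gap cases. In each overlap case, one of them follows from the other plus the sign, and the other only closes the existing range on $\tend^i$ (resp.\ $\qend^i$). So the dimension does not grow at all, which is more than enough for $O(n\log^4 n)$. With fixed length $\ell$, the two $Q$-constraints of case 1b become the single range $\qbeg^j-\ell<\qbeg^i\le\qbeg^j$. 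That leaves two dimensions and $O(n\log^2 n)$.

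Separately, your reason why overlapping cases are harmless, ``since $\connect$ is the maximum of its two terms'', is backwards. Each term of a maximum is a \emph{lower} bound on it, so evaluating a pair in a subcase whose formula is not the active one would make the minimum undercount. Ties such as $\tend^i=\tbeg^j$ or $\diag(a_i)=\diag(a_j)$ are harmless for a different reason: the competing formulas coincide there. Indeed $\linfty=\ddiag=\Delta_Q$ when $\Delta_T=0\le\Delta_Q$, and on equal diagonals $\Delta_T=\Delta_Q$ and $\diag(a_i)-\diag(a_j)=0$. Alternatively, make the cases disjoint as \Cref{eq:fourcases} does. The fractional-cascading remark is unnecessary: all points are known in advance, so a static range-tree skeleton with value updates already gives $O(\log^d n)$ per insertion and query.
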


Values $C[i]$ can be computed using dynamic multi-dimensional range minimum query data structures (a case of orthogonal range search in computational geometry~\cite{BergCKO08}), but this was not implemented.
Instead, \cite{JainGT22} implements a practical algorithm that \emph{approximates} the optimum $\chainxprec$-chain cost in $O(\mathrm{SOL} \cdot n + n \log n)$ expected time, where $\mathrm{SOL}$ is the cost of the output chain and we assume a uniform distribution of start positions of the anchors in $Q$ or $T$~\cite[Algorithm 2]{JainGT22}.
This solution, called \texttt{ChainX}, was tested on sequences of bacterial-like size and
features: experimentally, \texttt{ChainX}'s output on maximal unique match (MUM) anchors correlates
with edit distance and is faster than programs directly computing the edit
distance, but the performance degrades significantly when the number of anchors
increases~\cite[Section 6]{JainGT22}.
The approximate method \texttt{ChainX} was completed by Rizzo et al.\ into an exact algorithm \texttt{ChainX-opt}~\cite[Algorithm 1]{RizzoCM25}, shown to find a better score for a small number of cases with minimal practical slowdown compared to \texttt{ChainX}.
\begin{theorem}[{\cite[Theorem 2]{RizzoCM25} and \cite[Corollary
        2]{RizzoCM25arxiv}}]
    \label{theo:chainx-opt}
    The colinear chaining problems with $\linfty$ gap costs and $\ddiag$ overlap costs under $\prec$, $\wprec$, and $\chainxprec$ precedence can be solved in time $O(n \log n + \mathrm{OPT} \cdot n)$ on average, where $\mathrm{OPT}$ is the optimum chain cost, assuming that $n \le \min \lbrace \lvert Q \rvert, \lvert T \rvert \rbrace$ and the anchors are uniformly distributed in $Q$.
\end{theorem}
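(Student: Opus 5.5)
The plan is to reproduce the structure of \texttt{ChainX-opt}: a doubling search on a cost bound $B$, where each round evaluates \Cref{eq:DP} but only over transitions of cost at most $B$. First sort the anchors by $\qbeg$ (ties by $\tbeg$) in $O(n\log n)$ time, which gives a topological order for $\prec$. The same order also works for $\wprec$ and $\chainxprec$, and by \Cref{theo:anchorededitdistance} all three precedences give the same optimum, so it suffices to argue for $\prec$. Then run rounds with $B = 1, 2, 4, \dots$.

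In a round with bound $B$, compute a restricted table $C_B$. For each $a_j$, take the minimum over those predecessors $a_i \prec a_j$ with $\connect(a_i,a_j)\le B$, and set $C_B[j]=\infty$ if there is none. The round succeeds if $C_B[n+1]\le B$. Correctness is a one-line argument. A chain of total cost at most $B$ uses only transitions of cost at most $B$, so $C_B[n+1]\le B$ exactly when $\mathrm{OPT}\le B$, and in that case $C_B[n+1]=\mathrm{OPT}$. The first successful round has $B<2\,\mathrm{OPT}$ (or $B=1$). The total cost is therefore a geometric sum dominated by the last round, provided one round costs $O(n\log n + Bn)$ on average. (If the $O(n\log n)$ term is paid per round, it contributes $O(n\log n\log\mathrm{OPT})$; to avoid this, build the sorted order and any static structures once and reuse them across rounds.)

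To bound the cost of a round, split the candidate predecessors of $a_j$ using \Cref{def:connect}.
\begin{itemize}
\item \emph{Gap-gap case.} Here $\connect=\linfty\ge\Delta_Q(a_i,a_j)\ge 0$, and $\qbeg\le\qend\le\qbeg'$, so $a_i$ ends in the window $[\qbeg'-B..\qbeg']$ of $Q$.
\item \emph{Overlap case with $\qbeg\ge\qbeg'-B$.} Here $a_i$ starts in the window $[\qbeg'-B..\qbeg']$.
\end{itemize}
Both sets can be enumerated by scanning the sorted order backwards from $a_j$ (for the gap-gap case, also keeping anchors ordered by $\qend$). Since $n\le|Q|$ and starts are uniform, the expected number of anchors in a window of width $B$ is $O(Bn/|Q|+1)=O(B+1)$. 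Summed over $j$, this is $O(Bn)$ expected work.

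The remaining candidates are long anchors that start more than $B$ before $a_j$ in $Q$ but still overlap it. These are the main obstacle, and exactly what \texttt{ChainX} missed. For such an $a_i$ the cost is $\ddiag=|\diag(a_i)-\diag(a_j)|\le B$, so only predecessors whose diagonal lies within $B$ of $\diag(a_j)$ matter. I would maintain the active set $\{a_i : \qend>\qbeg'\}$ of already-processed overlapping anchors in a balanced search tree keyed by diagonal. I would then enumerate the active anchors whose diagonals lie in $[\diag(a_j)-B..\diag(a_j)+B]$, with $O(\log n)$ insertions and deletions per anchor.

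The delicate step is showing that this enumeration costs $O(B+1)$ on average per $a_j$. The idea is to charge each reported anchor to the window: active anchors on a common diagonal overlapping the same $\qbeg'$ are nested exact matches, which under the uniformity assumption are few. If this counting fails, the fallback is to keep a range-minimum structure over diagonals and answer two one-sided queries, for diagonals above and below $\diag(a_j)$, storing $C[i]\mp\diag(a_i)$. That costs $O(\log n)$ per anchor and still fits the bound, since $\ddiag$ splits into a linear function of the diagonal on each side. Adding the three contributions over the doubling rounds gives the claimed $O(n\log n+\mathrm{OPT}\cdot n)$ average time.
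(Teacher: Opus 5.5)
The paper does not prove this statement; it quotes it from the \texttt{ChainX-opt} work. I am therefore judging your argument on its own terms. The doubling framework, the per-round correctness argument and the window scans are fine. The gap is in how you handle long overlapping anchors, and mainly in the time accounting.

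You correctly note that the $O(n\log n)$ term must not be paid in every round, but both of your options for the third category pay it. The active set is rebuilt by the sweep in every round, and the fallback even stores round-dependent values $C_B[i]\mp\diag(a_i)$, so each round costs $O(n\log n)$ in search-tree operations. With $B=1,2,4,\dots$ your accounting only gives $O(n\log n\log\mathrm{OPT}+\mathrm{OPT}\cdot n)$. That does not imply $O(n\log n+\mathrm{OPT}\cdot n)$ (take $\mathrm{OPT}=\Theta(\log n)$), so the claim that the fallback ``still fits the bound'' is unjustified as written. There are two ways to repair this:
\begin{itemize}
\item Start the doubling at $B_0=\lceil\log n\rceil$. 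Each round then costs $O(nB)$, and the geometric sum is $O(n\max(\log n,\mathrm{OPT}))$.
\item Use the fact that the insert/delete sequence of the active set does not depend on the round. Record each anchor's diagonal-order neighbour once, then replay the updates in $O(1)$ each on a linked list.
\end{itemize}

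The enumeration bound is also not justified. Uniform start positions say nothing about how many long anchors span a given row, so ``nested exact matches are few'' is not an argument. The correct reason is deterministic. After merging overlapping anchors on the same diagonal, which preserves the optimum as the paper notes, at most one anchor per diagonal is active at row $\qbeg^j$. Hence at most $2B+1$ anchors are reported.

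Two smaller holes remain.
\begin{itemize}
\item Your case split misses predecessors with only a $T$-overlap ($\qend^i\le\qbeg^j$, $\tend^i>\tbeg^j$) that start more than $B$ earlier in $Q$. They are not in your active set. Here $\ddiag>\Delta_Q$ forces them to end in $(\qbeg^j-B..\qbeg^j]$, so they are caught only if the end-window scan evaluates $\connect$ for every anchor it finds, not just gap-gap ones.
\item In the fallback, the one-sided query above $\diag(a_j)$ can return active anchors with $\tbeg^i>\tbeg^j$, which are not predecessors. You must exclude them or show they never undercut $C[j]$. They do not: an optimal path to $\anchorstart(a_i)$ crosses column $\tbeg^j$ at some row $q^*\le\qbeg^j$. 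Descending vertically from there shows $C_B[i]+\diag(a_i)-\diag(a_j)\ge C[i]+\diag(a_i)-\diag(a_j)\ge d_{\anchorstart(a_j)}$.
\end{itemize}
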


\parag{Properties of anchored edit distance.}
We now list a few additional properties of the anchored edit distance.
As implied by the name, the anchored edit distance is a variant on the classical
edit distance, and admits a corresponding \emph{edit graph} \cite{myers86} that contains edges
corresponding to insertions, deletions, and substitutions (all cost 1) and
matches as indicated by the anchors (of cost 0), see \cref{fig:chain}.
We define $d_{ij}$ as the distance (length of the shortest path) from $(0,0)$ to $(i, j)$.
The proof of \Cref{theo:anchorededitdistance} gives the following fact.
\begin{corollary}\label{cor:pgcost}\label{cor:pgcostdij}
    Let $\mathcal{A} = a_1, \dots, a_n$ be sorted according to any $\prec$-topological order.
    The optimal partial cost $C[j]$ to the start of an anchor $a_j$ then equals $d_{\anchorstart(a_j)}$, the smallest distance from $(0,0)$ to $\anchorstart(a_j)$ in the edit graph defined by $\mathcal{A}$.
\end{corollary}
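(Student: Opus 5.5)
We prove the two inequalities $C[j] \ge d_{\anchorstart(a_j)}$ and $C[j] \le d_{\anchorstart(a_j)}$ separately, applying \Cref{theo:anchorededitdistance} to a modified instance: the anchor set restricted to what can precede $a_j$, with the end sentinel moved to $\anchorstart(a_j)$. This amounts to treating $T[0..\tbeg_j)$ and $Q[0..\qbeg_j)$ as the two strings. The only subtlety is that anchors of $\mathcal{A}$ may extend past $\anchorstart(a_j)$, so they are not anchors of the prefix pair. We therefore either truncate them, or argue directly in the edit graph, which keeps all anchors and treats $\anchorstart(a_j)$ as the target vertex.

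For the upper bound $d_{\anchorstart(a_j)} \le C[j]$, take an optimal chain $a_0=\startanchor, a_1,\dots,a_c=a_j$ realizing $C[j]$. We build a path in the edit graph from $(0,0)$ to $\anchorstart(a_j)$ of cost at most $\sum_{i<c}\connect(a_i,a_{i+1})$, proceeding inductively and keeping a current point that lies on anchor $a_i$ at or after its start. There are two cases for each transition $a_i \to a_{i+1}$.
\begin{itemize}
\item In the gap-gap case, we follow $a_i$ to its end for free. We then cover $\Delta_T,\Delta_Q\ge 0$ using $\min(\Delta_T,\Delta_Q)$ substitutions and $|\Delta_T-\Delta_Q|$ indels, for total cost $\max(\Delta_T,\Delta_Q)=\linfty$.
\item In the overlap case, we follow $a_i$ only until we reach row $\qbeg_{i+1}$ or column $\tbeg_{i+1}$, whichever comes first. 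Since $a_i\prec a_{i+1}$, this point is dominated by $\anchorstart(a_{i+1})$ and differs from it only along one axis, by exactly $|\diag(a_i)-\diag(a_{i+1})|$. We pay that many indels, giving cost $\ddiag$.
\end{itemize}
Here we must check that the truncation point really lies on $a_i$. This is the point where the choice of weak precedence $\prec$ on starts matters, as the paper's footnote hints.

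For the lower bound $C[j] \le d_{\anchorstart(a_j)}$, take a shortest path $P$ to $\anchorstart(a_j)$ and let $b_1,\dots,b_k$ be the anchors whose free match edges $P$ uses, in path order. Consecutive $b$'s are $\prec$-ordered on their starts, after discarding repeated or non-advancing uses and taking each anchor's first entry point. We append $a_j$, since $\anchorstart(a_j)$ dominates every point of $P$. Between consecutive anchor uses, the path segment runs from a point on $b_i$ to a point on $b_{i+1}$ using only unit-cost edges. Its cost is therefore at least the $\linfty$ distance between those points, and at least their diagonal difference.

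The main obstacle is showing that this segment cost dominates $\connect(b_i,b_{i+1})$ when the path leaves $b_i$ early or enters $b_{i+1}$ late. In the gap-gap case, leaving early only increases the $\linfty$ gap, since moving backward along a diagonal increases both coordinate gaps, and entering late likewise. In the overlap case, the diagonal difference is invariant along anchors, so the cost is at least $\ddiag$. Summing over the segments gives a $\prec$-chain ending at $a_j$ with partial cost at most $|P|$, hence $C[j]\le d_{\anchorstart(a_j)}$. Topological order is only needed so that the DP in \Cref{eq:DP} ranges over all predecessors.
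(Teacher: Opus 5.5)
Your upper bound $d_{\anchorstart(a_j)}\le C[j]$ is fine. The check you postponed is immediate: in the overlap case $\tend^i>\tbeg^{i+1}$ or $\qend^i>\qbeg^{i+1}$, so $a_i$ is longer than $\min(\tbeg^{i+1}-\tbeg^i,\qbeg^{i+1}-\qbeg^i)$. The gap is in the lower bound, at the claim that consecutive anchors used by $P$ are $\prec$-ordered on their starts. This is false even for shortest paths. Take $T=Q=\mathtt{A}^{20}$ and anchors $b_0=([2..10),[0..8))$, $b_1=([5..7),[4..6))$, $b_2=([4..20),[4..20))$, $a_j=([10..12),[11..13))$. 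Consider the path that goes $(0,0)\to(2,0)$, along $b_0$ to $(5,3)$, down to $(5,4)$, along $b_1$ to $(7,6)$, down to $(7,7)$, along $b_2$ to $(10,10)$, and down to $(10,11)$. It costs $5=d_{(10,11)}$, yet $\anchorstart(b_1)=(5,4)\not\preceq(4,4)=\anchorstart(b_2)$. Entry points are ordered along $P$, but $\prec$ and $\connect$ refer to the anchors' true starts and ends, so taking entry points does not yield a chain over $\mathcal{A}$. Discarding uses does not rescue your segment-by-segment accounting either. Once a used anchor is dropped, the part of $P$ spanning it contains free edges and need not dominate an $\linfty$ gap. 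Here, dropping the non-advancing $b_2$ leaves the $\prec$-chain $b_0,b_1,a_j$ of cost $2+1+5=8>5$: the part of $P$ from $(7,6)$ to $(10,11)$ costs $2$, while $\linfty(\anchorend(b_1),\anchorstart(a_j))=5$. Dropping $b_1$ happens to work, but you give no rule or argument for which uses to drop. The early-exit/late-entry issue you single out as the main obstacle is the easy part.

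The missing idea is that $d$ is constant along every anchor: $d_y=d_{\anchorstart(b)}$ for each point $y$ of $b$. One inequality follows from the free edges. For the other, write $\anchorstart(b)=(t,q)$, take any path to $y\neq(t,q)$, and let $w$ be its last vertex in $[0..t]\times[0..q]$. Then $w$ lies on the right or bottom side of this rectangle, so $\anchorstart(b)$ is reachable from $w$ by $|\diag(w)-\diag(b)|$ straight unit edges, giving $d_{\anchorstart(b)}\le d_w+|\diag(w)-\diag(b)|$. The remainder of the path costs at least $|\diag(w)-\diag(b)|$, because unit edges change the diagonal by at most one and free edges not at all. With this, induct along the topological order using only the last free edge of a shortest path $P$ to $\anchorstart(a_j)$. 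If that edge belongs to $b$ and ends at $y$, then $y$ lies strictly after $\anchorstart(b)$ on its diagonal and $y\preceq\anchorstart(a_j)$, so $b\prec a_j$. The rest of $P$ uses only unit edges, so it costs at least $\linfty(y,\anchorstart(a_j))\ge\connect(b,a_j)$. In the gap-gap case this is your early-exit argument; in the overlap case $\ddiag$ is the diagonal difference, which is at most $\linfty$ for dominated points. Hence $C[j]\le C[b]+\connect(b,a_j)\le d_y+\linfty(y,\anchorstart(a_j))\le d_{\anchorstart(a_j)}$. If $P$ uses no free edge, connect from $a_0$ directly. For comparison, the paper gives no standalone proof; it reads the corollary off the proof of the equivalence theorem in the cited works.
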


From this correspondence, it is clear \cite[Corollary 1]{RizzoCM25}
that overlapping or touching anchors on the same diagonal can be merged, and
similarly, that an anchor of length $\ell$ can be split into $\ell$ anchors of length
$1$, or into two anchors of lengths $\ell_1 + \ell_2 = \ell$, while keeping all partial costs the same.

\subsection{Dynamic predecessor data structures}\label{subsec:predecessor}

A data structure for dynamic predecessor queries supports the following
operations on a set~$S$ of totally ordered keys:
\begin{itemize}
  \item $\mathsf{insert(x)}$: Add $x$ to $S$.
  \item $\mathsf{remove(x)}$: Remove $x$ from $S$.
  \item $\mathsf{succ(x)}$: Return $\min\{y \in S \mid y> x\}$, the smallest
    element larger than $x$.
  \item $\mathsf{pred(x)}$: Return $\max\{y \in S \mid y< x\}$, the largest
    element less than $x$.
\end{itemize}
In practice, keys can also have associated values, in which case
$\mathsf{get(x)}$ can return this value in case $x\in S$.

Common implementations use balanced binary trees (red-black trees \cite{red-black-tree})
or B-trees \cite{b-tree}, taking $O(n)$ words of space and $O(\log n)$ time for a set
of $n$ elements.
When the keys are integers up to $U$, Van Emde Boas trees
\cite{van-emde-boas-trees-2} use $O(U)$ space and allow all operations
in $O(\log \log U)$ time (in the word RAM model).
Y-fast tries \cite{xy-fast-trie} improve this to $O(\log \log U)$ time in $O(n)$ space.
Johnson's priority queue \cite{johnson-priority-queue} is another variant often
used in chaining methods that uses $O(U^{1/p})$ space for $O(p \log \log D)$
queries, where $D$ is the distance between the two elements surrounding the
queried element $x$ and $p$ is a small positive integer parameter that can be chosen freely. 

\parag{With known keys.}
In case the set of all $O(n)$ inserted keys $x$ is known in advance,
they can be sorted in $O(n \log \log n)$ time and $O(n)$ space~\cite{DBLP:conf/stoc/Han02}.
Then, each key can be assigned its \emph{rank} from $1$ to $n$ in the sorted list
(e.g.\ via a lookup table), and the predecessor structure can use the ranks instead.
Thus, this case allows for $O(\log \log n)$ rather than $O(\log \log U)$ queries
while using $O(n)$ space.

\parag{Incremental suffix minimum queries.}
We can also use predecessor structures for \emph{incremental suffix minimum
queries}, where we insert keys $x$ with an associated value $f(x)$ (but never
remove them) and get queries that ask for $\mathsf{smq}(x)=\min\{f(y) \mid y\in S : y\geq x\}$.
A standard way to support this is the following:
if we try to insert $(x, f(x))$ but $f(x)\geq \mathsf{smq}(x)$, do nothing;
otherwise, insert $(x, f(x))$ (or update the value associated with $x$), and
repeatedly find $y=\mathsf{pred}(x)$, then remove $y$ as long as $f(x) \leq f(y)$.
If $S$ is known in advance, insertions and $\mathsf{smq}$ queries can both be supported in $O(\log \log n)$ amortized time.

\section{Chaining in \texorpdfstring{$\bm{O(n \log \log n)}$}{O(n log log n)} time}
\label{sec:nlogn}
\enlargethispage{2\baselineskip}
We now give an algorithm for colinear chaining with $\linfty$ gap costs and $\Delta_{\diag}$ overlap costs on the input set $\mathcal{A} = \lbrace a_1, \dots, a_n \rbrace$ (\Cref{prob:CLC}) in $O(n \log \log n)$ time and $O(n)$ space.
We assume that $\mathcal{A}$ is sorted in lexicographical order of 2D points $\anchorstart(a)$.
We do not worry about ties, because we also assume that $\mathcal A$ does not contain perfectly-overlapping or touching anchors (on the same diagonal), as these can be merged into longer anchors without changing the anchor-supported character matches and thus the optimal $\gcost$ chain (\Cref{theo:anchorededitdistance}).
These two assumptions can be enforced in $O(n \log \log n)$ time and $O(n)$
space by sorting the anchors first by diagonal and
then by anti-diagonal using~\cite{DBLP:conf/stoc/Han02} and then merging adjacent overlapping anchors.

We follow the same general strategy as all sub-quadratic-time chaining algorithms (see e.g.\ \cite{EppsteinGGI92one,BakerG02,JainGT22}):
we partition the minimum of \Cref{eq:DP} into four disjoint cases described by the $\connect$ function and use a line sweep of the 2D plane $[0..\lvert T \rvert) \times [0..\lvert Q \rvert)$.
Consider \Cref{fig:cases} and the following equation:

\begin{align}
    C[j] &= \min \big(C^\mathrm{1a}[j], C^\mathrm{1b}[j], C^\mathrm{2}[j], C^\mathrm{3}[j] \big), \qquad\text{where} &\label{eq:fourcases}\\[2ex]
    C^{\mathrm{1a}}[j] &:= \min_{\substack{i \in [0..j) \,: \\
        a_i \prec a_j, \; \ov_T(a_i,a_j) > 0, \\ \diag(a_i) \ge \diag(a_j)}} C[i] + \diag(a_i) - \diag(a_j) &\text{(overlap, larger $T$-overlap)} \label{eq:overlap-bigger-Tov} \\
    C^{\mathrm{1b}}[j] &:= \min_{\substack{i \in [0..j) \,: \\
        a_i \prec a_j, \; \ov_Q(a_i,a_j) > 0, \\ \diag(a_i) < \diag(a_j)}} C[i] + \diag(a_j) - \diag(a_i) &\text{(overlap, larger $Q$-overlap)} \label{eq:overlap-bigger-Qov} \\
    C^{\mathrm{2}}[j]  &:= \min_{\substack{i \in [0..j) : \\     \anchorend(a_i)\preceq \anchorstart(a_j), \\ \diag(a_i) > \diag(a_j)}} C[i] + \Delta_Q(a_i,a_j) &\text{(gap-gap, larger $Q$-gap)} \label{eq:gap-gap-bigger-Qgap}\\
    C^{\mathrm{3}}[j]  &:= \displaystyle \min_{\substack{i \in [0..j) : \\     \anchorend(a_i)\preceq \anchorstart(a_j), \\ \diag(a_i) \le \diag(a_j)}} C[i] + \Delta_T(a_i,a_j) &\text{(gap-gap, larger $T$-gap)} \label{eq:gap-gap-bigger-Tgap}
\end{align}
where $\min(\emptyset) = +\infty$. 
Then, our algorithm can be divided into three main stages, as partially described in \Cref{algo:main}:

\begin{figure}[t]
    \centering
    \begin{tikzpicture}
	\tikzset{
		invisible/.style = {minimum size=0pt, inner sep=0pt}
	}

\matrix[
	matrix of nodes,
	minimum size=2mm,
	column sep={3.5mm,between origins},
	row sep={3.5mm,between origins},
	inner sep=1pt,
	nodes={anchor=base},
	anchor=south west,
	font=\ttfamily,
] (Q) at (0,0) {
A&C&A&T&C&T&G&A&T&G&A&C&A&A&A&T&G\\
};

\matrix[
	matrix of nodes,
	minimum size=2mm,
	column sep={3.5mm,between origins},
	row sep={3.5mm,between origins},
	inner sep=1pt,
	nodes={anchor=base},
	anchor=north east,
	font=\ttfamily,
] (T) at (0,0) {
A\\C\\A\\A\\A\\T\\G\\C\\A\\T\\G\\A\\C\\A\\A\\T\\C\\
};

\draw[-,very thick, colorA] ($(Q-1-8.north west) + (0,.05cm)$) -- node[invisible] (Qanchor0) {} ($(Q-1-10.north east) + (0,.05cm)$);
\draw[-,very thick, colorA] ($(T-5-1.north west) + (-.05cm,0)$) -- node[invisible] (Tanchor0) {} ($(T-7-1.south west) + (-.05cm,0)$);

\node[invisible] at ($(Q-1-8.west|-T-5-1.north) + (-1pt,1pt)$) (a0northwest) {};
\node[invisible] at ($(Q-1-10.west|-T-7-1.north) + (3.5mm,-3.5mm) + (-1pt,1pt)$) (a0southeast) {};
\draw[dashed] (a0northwest) -- ($(Q-1-17.east|-T-14-1.south) + (-1pt,1pt)$);
\draw[-,very thick,draw=colorA] (a0northwest) rectangle (a0southeast) node[invisible,pos=.5] (a0) {};

\draw[dashed,thick,-,colorA] (a0northwest) -- ($(Q-1-8.south west-|a0northwest)$);
\draw[dashed,thick,-,colorA] (a0southeast) -- ($(Q-1-10.south east-|a0southeast)$);
\draw[dashed,thick,-,colorA] (a0northwest) -- ($(T-5-1.north east|-a0northwest)$);
\draw[dashed,thick,-,colorA] (a0southeast) -- ($(T-7-1.south east|-a0southeast)$);

\draw[-,very thick, colorB] ($(Q-1-6.north west) + (0,.10cm)$) -- node[invisible] (Qanchor1) {} ($(Q-1-8.north east) + (0,.10cm)$);
\draw[-,very thick, colorB] ($(T-10-1.north west) + (-.10cm,0)$) -- node[invisible] (Tanchor1) {} ($(T-12-1.south west) + (-.10cm,0)$);

\node[invisible] at ($(Q-1-6.west|-T-10-1.north) + (-1pt,1pt)$) (a1northwest) {};
\node[invisible] at ($(Q-1-8.west|-T-12-1.north) + (3.5mm,-3.5mm) + (-1pt,1pt)$) (a1southeast) {};
\draw[dashed] (a1northwest) -- ($(Q-1-13.east|-T-17-1.south) + (-1pt,1pt)$) node[pos=1,invisible] (diagonalbottom1) {};
\draw[-,very thick,draw=colorB] (a1northwest) rectangle (a1southeast) node[invisible,pos=.5] (a1) {};

\draw[dashed,thick,-,colorB] (a1northwest) -- ($(Q-1-6.south west-|a1northwest)$);
\draw[dashed,thick,-,colorB] (a1southeast) -- ($(Q-1-8.south east-|a1southeast)$);
\draw[dashed,thick,-,colorB] (a1northwest) -- ($(T-10-1.north east|-a1northwest)$);
\draw[dashed,thick,-,colorB] (a1southeast) -- ($(T-12-1.south east|-a1southeast)$);

\draw[-,very thick, colorC] ($(Q-1-12.north west) + (0,.05cm)$) -- node[invisible] (Qanchor2) {} ($(Q-1-14.north east) + (0,.05cm)$);
\draw[-,very thick, colorC] ($(T-13-1.north west) + (-.05cm,0)$) -- node[invisible] (Tanchor2) {} ($(T-15-1.south west) + (-.05cm,0)$);

\node[invisible] at ($(Q-1-12.west|-T-13-1.north) + (-1pt,1pt)$) (a2northwest) {};
\node[invisible] at ($(Q-1-14.west|-T-15-1.north) + (3.5mm,-3.5mm) + (-1pt,1pt)$) (a2southeast) {};
\draw[dashed] (a2northwest) -- ($(Q-1-16.east|-T-17-1.south) + (-1pt,1pt)$) node[pos=1,invisible] (diagonalbottom2) {};

\draw[-,very thick,draw=colorC] (a2northwest) rectangle (a2southeast) node[invisible,pos=.5] (a2) {};

\draw[dashed,thick,-,colorC] (a2northwest) -- ($(Q-1-12.south west-|a2northwest)$);
\draw[dashed,thick,-,colorC] (a2southeast) -- ($(Q-1-14.south east-|a2southeast)$);
\draw[dashed,thick,-,colorC] (a2northwest) -- ($(T-13-1.north east|-a2northwest)$);
\draw[dashed,thick,-,colorC] (a2southeast) -- ($(T-15-1.south east|-a2southeast)$);

\draw[-,very thick, colorD] ($(Q-1-13.north west) + (0,.15cm)$) -- node[invisible] (Qanchor3) {} ($(Q-1-15.north east) + (0,.15cm)$);
\draw[-,very thick, colorD] ($(T-3-1.north west) + (-.10cm,0)$) -- node[invisible] (Tanchor3) {} ($(T-5-1.south west) + (-.10cm,0)$);

\node[invisible] at ($(Q-1-13.west|-T-3-1.north) + (-1pt,1pt)$) (a3northwest) {};
\node[invisible] at ($(Q-1-15.west|-T-5-1.north) + (3.5mm,-3.5mm) + (-1pt,1pt)$) (a3southeast) {};
\draw[dashed] (a3northwest) -- ($(Q-1-17.east|-T-7-1.south) + (-1pt,1pt)$);
\draw[-,very thick,draw=colorD] (a3northwest) rectangle (a3southeast) node[invisible,pos=.5] (a3) {};

\draw[dashed,thick,-,colorD] (a3northwest) -- ($(Q-1-13.south west-|a3northwest)$);
\draw[dashed,thick,-,colorD] (a3southeast) -- ($(Q-1-15.south east-|a3southeast)$);
\draw[dashed,thick,-,colorD] (a3northwest) -- ($(T-3-1.north east|-a3northwest)$);
\draw[dashed,thick,-,colorD] (a3southeast) -- ($(T-5-1.south east|-a3southeast)$);




\draw[-,very thick, colorG] ($(Q-1-14.north west) + (0,.10cm)$) -- node[invisible] (Qanchor5) {} ($(Q-1-16.north east) + (0,.10cm)$);
\draw[-,very thick, colorG] ($(T-14-1.north west) + (-.10cm,0)$) -- node[invisible] (Tanchor5) {} ($(T-16-1.south west) + (-.10cm,0)$);

\node[invisible] at ($(Q-1-14.west|-T-14-1.north) + (-1pt,1pt)$) (a5northwest) {};
\node[invisible] at ($(Q-1-16.west|-T-16-1.north) + (3.5mm,-3.5mm) + (-1pt,1pt)$) (a5southeast) {};
\draw[dashed] (a5northwest) -- ($(Q-1-17.east|-T-17-1.south) + (-1pt,1pt)$);
\draw[-,very thick,draw=colorG] (a5northwest) rectangle (a5southeast) node[invisible,pos=.5] (a5) {};

\draw[dashed,thick,-,colorG] (a5northwest) -- ($(Q-1-14.south west-|a5northwest)$);
\draw[dashed,thick,-,colorG] (a5southeast) -- ($(Q-1-16.south east-|a5southeast)$);
\draw[dashed,thick,-,colorG] (a5northwest) -- ($(T-14-1.north east|-a5northwest)$);
\draw[dashed,thick,-,colorG] (a5southeast) -- ($(T-16-1.south east|-a5southeast)$);

\node[colorAdark] at (a0) {\LARGE\backgroundcontour{$a_2$}};
\node[colorAdark] at (a0) {\LARGE$a_2$};

\node[colorBdark] at (a1) {\LARGE\backgroundcontour{$a_3$}};
\node[colorBdark] at (a1) {\LARGE$a_3$};

\node[colorCdark] at (a2) {\LARGE\backgroundcontour{$a_{\mathrm{1b}}$}};
\node[colorCdark] at (a2) {\LARGE$a_{\mathrm{1b}}$};

\node[colorDdark] at (a3) {\LARGE\backgroundcontour{$a_{\mathrm{1a}}$}};
\node[colorDdark] at (a3) {\LARGE$a_{\mathrm{1a}}$};


\node[colorGdark] at (a5) {\LARGE\backgroundcontour{$a_j$}};
\node[colorGdark] at (a5) {\LARGE$a_j$};

\node[anchor=base east,inner sep=0pt] at (Q.west) {$T =\,$};
\node[anchor=base east,inner sep=0pt] at (T-1-1.base west) {$Q =\,$};
\node at (Q.north east) {};
\node at (T.south west) {};

\begin{scope}[inner sep=0pt,circle,minimum size=4pt]
\node[fill=colorAdark,draw=colorAdark] at (a0southeast) {};
\node[fill=colorBdark,draw=colorBdark] at (a1southeast) {};
\node[fill=colorCdark,draw=colorCdark] at (a2northwest) {};
\node[fill=colorCdark,draw=colorCdark] at (a2southeast) {};
\node[fill=colorDdark,draw=colorDdark] at (a3northwest) {};
\node[fill=colorDdark,draw=colorDdark] at (a3southeast) {};
\node[fill=colorGdark,draw=colorGdark] at (a5northwest) {};
\end{scope}

\begin{scope}[on background layer]
    \fill[colorA!20] (a5northwest) -- (Q-1-1.south west-|T-1-1.north east) -- (a5northwest|-Q-1-1.south west) -- cycle;
    \fill[colorB!20] (a5northwest) -- (Q-1-1.south west-|T-1-1.north east) -- (a5northwest-|T-1-1.north east) -- cycle;
    \fill[colorC!20] (a5northwest) -- (T-14-1.north east|-a5northwest) -- (T-16-1.north east|-a5southeast) -- (a5southeast) -- (a5northwest);
    \fill[colorD!15] (a5northwest) -- (Q-1-14.south west-|a5northwest) -- (a5southeast|-Q-1-16.south east) -- (a5southeast) -- (a5northwest);
\end{scope}

\begin{scope}[on background layer]
\draw[black!15,step=3.5mm] (0,0) grid ($(Q-1-17.south east|-T-17-1.south east) + (1.1pt,-.01pt)$);
\end{scope}

\end{tikzpicture}
    \caption{Anchors $a_{\mathrm{1a}}$, $a_{\mathrm{1b}}$ $a_{\mathrm{2}}$, $a_{\mathrm{3}}$ showing the four mutually exclusive cases for the computation of $C[j]$ (\Cref{eq:fourcases}): overlap case with bigger $T$-overlap (1a) or bigger $Q$-overlap (1b), gap-gap with bigger $Q$-gap (2), and gap-gap with bigger $T$-gap (3).
    The areas containing the endpoint for each recursive anchor $a_i$ are highlighted with different colors, and the startpoints of $a_{\mathrm{1a}}$, $a_{\mathrm{1b}}$ (\tikz{\node[circle,inner sep=0pt,fill=colorDdark,minimum size=4pt] {};} and \tikz{\node[circle,inner sep=0pt,fill=colorCdark,minimum size=4pt] {};} markers) are also highlighted to indicate that in \Cref{eq:overlap-bigger-Qov,eq:overlap-bigger-Tov} they need to be checked.}\label{fig:cases}
\end{figure}

\begin{description}
\item[Pre-processing.]

  First, we compute the sorted orders for the ``left-to-right'' (i.e.\ $\tbeg$ and $\tend$) and ``top-to-bottom'' (i.e.\ $\qbeg$ and $\qend$) line sweeps of the set of anchor startpoints and endpoints, and we additionally sort the anchors by their diagonal value $\diag(a)$ to compute $\rank_{\diag(a)}$, where $\rank_{\diag(a)} = d$ if $\diag(a)$ is the $d$-th hit diagonal.
    As explained in \Cref{subsec:predecessor}, this can be done in $O(n \log \log n)$ time and $O(n)$ space, since sorting considers at most $2n$ points.
    Following Baker and Giancarlo~\cite{BakerG02}, for each $j \in [1..n]$ we compute anchors $\hat{a}_j$ and $\overleftarrow{a}_{\!\!j}$, respectively the closest anchor overlapping $\anchorstart(a_j)$ in a bigger diagonal (i.e.\ ``above'') and a smaller diagonal (i.e.\ ``left''), with two line sweeps as detailed in \Cref{subsec:overlaps} ($\hat{a}_j$ and $\overleftarrow{a}_{\!\!j}$ are called $\mathit{visl}(a_j)$ and $\mathit{visa}(a_j)$ in \cite{BakerG02}).
    This will allow handling both overlap cases in constant time during the main line sweep.

    \item[Main line sweep.] We compute values $C[j]$ while performing a line sweep of the anchor startpoints and endpoints, from smaller to larger $T$-coordinate value.
    Following Eppstein et al.~\cite{EppsteinGGI92one} (see also \cite{DBLP:journals/algorithmica/ChaoM95,BakerG02}), when a startpoint $\anchorstart(a_j)$ is processed, $C[j]$ is calculated as per \Cref{eq:fourcases} using $\hat{a}_j$, $\overleftarrow{a}_{\!\!j}$, and the data structures for the two gap-gap cases 2 and 3 as we will describe in \Cref{sub:biggerQgap,sub:biggerTgap}.
    When a startpoint or endpoint is reached (endpoints are processed before startpoints if they fall on the same $T$-position), these gap-gap data structures are updated. 
    \item[Post-processing.] Value $C[n+1]$ is equal to the anchored edit distance. Instead of augmenting our solution to store backtracking information (see e.g.\ \cite{RizzoCM25}), we can optionally recover an optimal $\prec$-chain from values $C[j]$ in $O(n)$ time: we start from $j = n+1$ and we iteratively perform a linear search to find a predecessor $i < j$ such that $C[i] + \connect(a_i,a_j) = C[j]$ and we set $j = i$, until $j = 0$.  
\end{description}

\begin{algorithm}[t]
    \SetKwInOut{Input}{Input}
    \SetKwInOut{Output}{Output}
    \Input{Anchors $\mathcal{A} = \lbrace a_1, \dots, a_n \rbrace$ between $T$ and $Q$ such that no two anchors on the same diagonal overlap or have an empty gap (i.e.\ $\Delta_T > 0$ or $\Delta_Q > 0$)}
    \Output{The minimum global chaining cost ($\gcost$) over all $\prec$-chains}
    Sort anchors $a \in \mathcal{A}$ by diagonal value $\diag(a)$, to compute $\mathcal{D} := \lbrace \diag(a) : a \in \mathcal{A} \rbrace$\;
    $\text{\texttt{preprocess\_1a}}(\mathcal{A})$ \Comment*[r]{Computes anchors $\hat{a}$}
    $\text{\texttt{preprocess\_1b}}(\mathcal{A})$\Comment*[r]{Computes anchors $\overleftarrow{a}$}
    Sort startpoints and endpoints $P = \lbrace (a_j.\mathsf{start},j) : a_j \in \mathcal{A} \rbrace \cup \lbrace (a_j.\mathsf{end},j) : a_j \in \mathcal{A} \rbrace \cup \lbrace ((0,0),0), ((\lvert T \rvert, \lvert Q \rvert), n+1) \rbrace$ in increasing $T$-coordinate order (in breaking ties, endpoints are smaller than startpoints and $\prec$ is used for points of the same type)\;
    $\text{\texttt{init\_case\_2}}(\mathcal{A})$\;
    $\text{\texttt{init\_case\_3}}(\mathcal{A})$\;
    $C[0] \gets 0$\;
    \For{$(p,j) \in P$ in sorted order}{%
        \uIf{$p = a_j.\mathsf{start}$}{%
            $C^\mathrm{1a}[j] \gets C[\hat{a}_j] + \ddiag(\hat{a}_j, a_j)$\Comment*[r]{\Cref{lem:overlapcase1a}}
            $C^\mathrm{1b}[j] \gets C[\overleftarrow{a}_{\!\!j}] + \ddiag(\overleftarrow{a}_{\!\!j}, a_j)$\Comment*[r]{\Cref{lem:overlapcase1a}}
            $C^2[j] \gets \text{\texttt{process\_startpoint\_case\_2}}(a_j)$\;
            $C^3[j] \gets \text{\texttt{process\_startpoint\_case\_3}}(a_j)$\;
            $C[j] = \min \big(C^\mathrm{1a}[j], C^\mathrm{1b}[j], C^\mathrm{2}[j], C^\mathrm{3}[j] \big)$\Comment*[r]{\Cref{eq:fourcases}}
        }
        \Else(\Comment*[f]{$p = a_j.\mathsf{end}$}){%
            $\text{\texttt{process\_endpoint\_case\_2}}(a_j)$\;
            $\text{\texttt{process\_endpoint\_case\_3}}(a_j)$\;
        }
    }
    \Return $C[n+1]$\Comment*[r]{Best $\pgcost$ to final dummy anchor $a_{n+1} = \finalanchor$}
    \caption{High-level description of chaining with $\linfty$ gap costs and $\ddiag$ overlap costs.}\label{algo:main}
\end{algorithm}

The rest of this section is dedicated to showing how to solve the recursive cases individually, in order of hardness: the overlap cases (1a and 1b), the bigger $Q$-gap case (2), and the bigger $T$-gap case (3).
The correctness will follow from \Cref{eq:fourcases} by induction on $j = 0, \dots, n+1$, obtaining our main theoretical result thanks to \Cref{theo:anchorededitdistance}.
\begin{theorem}\label{theo:nloglogn}
    The anchored edit distance (\Cref{def:aed}) can be computed in $O(n \log \log n)$ time and $O(n)$ space, by solving the colinear chaining problem with $\linfty$ gap costs and $\ddiag$ overlap costs (\Cref{prob:CLC}) in the same time and space.
\end{theorem}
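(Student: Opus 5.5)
The plan is to prove \Cref{theo:nloglogn} by showing that \Cref{algo:main} computes every $C[j]$ correctly, and that each stage runs in $O(n\log\log n)$ time and $O(n)$ space. The value $C[n+1]$ is then the optimal $\gcost$, which equals the anchored edit distance by \Cref{theo:anchorededitdistance}.

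For correctness we induct on $j$ in the sweep order, relying on \Cref{eq:fourcases}. Its four cases are disjoint and cover every predecessor $a_i\prec a_j$, since $\connect$ takes the gap-gap branch exactly when $\anchorend(a_i)\preceq\anchorstart(a_j)$. Otherwise the anchors overlap in $T$ or in $Q$, and the sign of $\diag(a_i)-\diag(a_j)$ decides which overlap is larger. It therefore suffices to show that each stage returns the corresponding minimum.

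\textbf{Overlap cases (1a, 1b).} Following Baker and Giancarlo, we show that the single candidate $\hat a_j$ (respectively $\overleftarrow a_{\!\!j}$) suffices. Take any anchor $a_i$ in case 1a. Its diagonal segment crosses the horizontal ray going left from $\anchorstart(a_j)$, and $\hat a_j$ is the closest such anchor in diagonal. By \Cref{cor:pgcostdij}, $C[\hat a_j]$ is at most the edit-graph distance to its start. Walking along $\hat a_j$ and then taking indels to $\anchorstart(a_j)$ shows that $C[\hat a_j]$ plus $\ddiag$ is at most $d$ of the point on $\hat a_j$ nearest $\anchorstart(a_j)$, plus the diagonal difference. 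That point is in turn reachable from $a_i$ at cost $\le C[i]+\diag(a_i)-\diag(\hat a_j)$, through free matches along $a_i$ and vertical or horizontal steps. This exchange argument (a triangle inequality in the edit graph) gives $C^{1a}[j]=C[\hat a_j]+\ddiag(\hat a_j,a_j)$, and symmetrically for 1b. The anchors $\hat a_j$ and $\overleftarrow a_{\!\!j}$ are computed by two line sweeps. Each sweep keeps the currently active anchors in a predecessor structure keyed by diagonal rank. The keys are known in advance, so by \Cref{subsec:predecessor} each operation costs $O(\log\log n)$.

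\textbf{Gap-gap cases (2, 3).} We follow Eppstein et al.\ and Chao and Miller. In case 2 the cost is $C[i]+\qbeg'-\qend$, and the condition $\diag(a_i)>\diag(a_j)$ together with $\anchorend(a_i)\preceq\anchorstart(a_j)$ restricts the candidates. When the sweep reaches an endpoint, we insert the key $\rank_{\diag(a_i)}$ with value $C[i]-\qend$ into an incremental suffix-minimum structure. At a startpoint we query the diagonals greater than $\diag(a_j)$ and add $\qbeg'$. The main obstacle is enforcing $\qend\le\qbeg'$ simultaneously with the diagonal constraint, because the sweep only guarantees $\tend\le\tbeg'$. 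First I would argue that the extra condition is harmless. If $\diag(a_i)>\diag(a_j)$ and $\tend\le\tbeg'$ but $\qend>\qbeg'$, then the candidate value overestimates nothing useful. Its value $C[i]+\Delta_Q$ is still at least the true distance to $\anchorstart(a_j)$ by \Cref{cor:pgcostdij}, via a path through the edit graph. That path either exists in the $\linfty$ sense, or dominates a case-1 candidate. So relaxed minima give upper bounds that are still valid and are attained by the true optimum. If this fails, the fallback is Chao and Miller's partitioning of the plane into diagonal-bounded regions, using Johnson/vEB-style structures with key deletion. Case 3 is symmetric: we key by diagonal, store $C[i]-\tend$, query diagonals $\le\diag(a_j)$ as a prefix minimum, and add $\tbeg'$.

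\textbf{Complexity.} Sorting all coordinates and diagonals takes $O(n\log\log n)$ time and $O(n)$ space by Han's algorithm. All predecessor structures operate on ranks in $[1..2n]$, so each of the $O(n)$ operations costs amortized $O(\log\log n)$ using y-fast tries in linear space. The post-processing backtrace is optional and does not change these bounds.
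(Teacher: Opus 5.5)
There is a genuine gap in case 3 (gap-gap, bigger $T$-gap), which is where the difficulty of the theorem lies. You call it symmetric to case 2: insert $C[i]-\tend^i$ under key $\diag(a_i)$ when the left-to-right sweep passes $\anchorend(a_i)$, then take a prefix minimum over diagonals $\le\diag(a_j)$. But the sweep only enforces $\tend^i\le\tbeg^j$, and together with $\diag(a_i)\le\diag(a_j)$ this does \emph{not} imply $\qend^i\le\qbeg^j$. The prefix minimum therefore admits anchors that are not case-3 predecessors and prices them at their $T$-gap, so it underestimates. Concretely, take $a_1=([1..101),[1..101))$ on diagonal $0$ and $a_2=([200..299),[10..109))$ on diagonal $190$, with $|T|=300$, $|Q|=110$ over a unary alphabet. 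Then $C[1]=1$, and at $\anchorstart(a_2)=(200,10)$ your structure returns $200+C[1]-101=100$. Yet $\qend^1=101>10=\qbeg^2$, and the true value is $C[2]=191$. At the final anchor the algorithm outputs $101$, below the trivial bound $190$, whereas the anchored edit distance is $192$. Conversely, the obstacle you raise for case 2 is vacuous: $\diag(a_i)>\diag(a_j)$ and $\tend^i\le\tbeg^j$ give $\qbeg^j-\qend^i>\tbeg^j-\tend^i\ge 0$. So there the sweep plus a suffix minimum is exact, and neither your ``harmless'' argument nor the fallback is needed. The asymmetry is intrinsic. The case-3 region is the octant bounded by the horizontal and diagonal lines through $\anchorend(a_i)$, and the $T$-sweep supplies neither half-plane. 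Switching to a $Q$-sweep only moves the same problem to case 2, since all $C$ values must be computed in one consistent sweep.

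The missing ingredient is the Eppstein et al.\ technique behind \Cref{lem:case3}. You maintain the owner array $R_t$ of \Cref{eq:ownerarray}, which is well defined by \Cref{obs:intersectionare}, in compressed form as a doubly-linked list of active horizontal and diagonal dividing lines (\Cref{obs:dividinglines}), indexed by two predecessor structures over $\mathcal{H}$ and $\mathcal{D}$. Startpoint queries and endpoint insertions touch $O(1)$ lines. Each event where a diagonal line meets a horizontal one and removes a region is charged to one of the $2n$ lines. To get $O(n)$ rather than $O(|T|+|Q|)$ space, these intersection events are deferred to the next column containing an anchor start or end, found by a successor query on those columns. They are processed there in arbitrary order, which is sound because they only delete regions without reordering the list. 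Without this component, correctness and the $O(n\log\log n)$ bound for $C^3$ are not established.

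A smaller issue is in your overlap argument. You need $C[\hat a_j]\le d_p$ for the point $p$ of $\hat a_j$ in column $\tbeg^j$, but walking along $\hat a_j$ only gives $d_p\le C[\hat a_j]$. The needed direction is that distances are constant along an anchor, which is the splitting remark after \Cref{cor:pgcost}, used in \Cref{lem:overlapcase1a}. With that fact added, your triangle-inequality argument is a valid, more direct alternative to the paper's proof by contradiction.
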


\subsection{Overlap cases 1a and 1b}\label{subsec:overlaps}
We study the computation of values $C^{\mathrm{1a}}[j]$ (\Cref{eq:overlap-bigger-Tov}), as case $C^{\mathrm{1b}}$ is solved in a symmetrical way.
We are in this case when $a_i \prec a_j$, $\ov_T(a_i, a_j) > 0$, and $\diag(a_i) \ge \diag(a_j)$, meaning that the anchors from $C^{\mathrm{1a}}[j]$ are exactly anchors $a_i$ such that: ($i$) their $T$-interval $[\tbeg^i..\tend^i)$ contains $\tbeg^j$; and ($ii$) $\diag(a_i) \ge \diag(a_j)$.
Note that conditions $i$ and $ii$ are sufficient for the colinearity $a_i \prec a_j$.
By adapting an argument of Baker and Giancarlo~\cite[Section 3.1]{BakerG02}, we can prove that we only need to check the closest anchor above $a_j$, as all other anchors cannot be a better case-1a predecessor.

\begin{lemma}\label{lem:overlapcase1a}
    In the overlap case with bigger $T$-overlap (\Cref{eq:overlap-bigger-Tov}) and under the assumption that no two anchors overlap on the same diagonal, the optimal partial cost $C^\mathrm{1a}[j]$ to anchor $a_j$ is equal to $C[\hat{a}] + \ddiag(\hat{a},a_j)$, where $\hat{a}$ is the anchor with the smallest $\diag(\hat{a})$ bigger than $\diag(a_j)$ such that $\tbeg^j \in [\hat{\tbeg}..\hat{\tend})$.
    If $\hat{a}$ is not defined, then $C^\mathrm{1a}[j] = +\infty$.
\end{lemma}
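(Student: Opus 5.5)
We reduce the minimum in \Cref{eq:overlap-bigger-Tov} to a single candidate. We show that every case-1a candidate $a_i$ is dominated by $\hat a$, i.e.\ $C[\hat a] + \diag(\hat a) \le C[i] + \diag(a_i)$. Since $\hat a$ is itself a candidate, this gives the stated equality. Throughout we work in the edit graph and use \Cref{cor:pgcostdij}, which identifies $C[i]$ with $d_{\anchorstart(a_i)}$. This lets us compare costs by exhibiting paths rather than by manipulating chains.

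\textbf{Step 1: the candidates.} As noted before the lemma, the case-1a candidates are exactly the anchors $a_i$ with $\tbeg^j \in [\tbeg^i..\tend^i)$ and $\diag(a_i) \ge \diag(a_j)$. Equality $\diag(a_i)=\diag(a_j)$ is impossible: $a_i$ would then contain $\anchorstart(a_j)$ on the same diagonal, contradicting the assumption that no two anchors on a diagonal overlap or touch. Hence every candidate has $\diag(a_i) > \diag(a_j)$. By its definition, $\hat a$ is the candidate of smallest diagonal, so it exists iff the candidate set is nonempty; this already gives the $+\infty$ clause. For any other candidate we have $\diag(a_i) > \diag(\hat a)$.

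\textbf{Step 2: an auxiliary monotonicity fact.} We need $d_{(t-1,q-1)} \le d_{(t,q)}$ for all $t,q \ge 1$. For this we take a shortest path to $(t,q)$ and look at its last entry into row $q-1$ or column $t-1$ (the path must cross one of these lines). We then argue by path surgery that truncating there and walking along the row or column costs no more than the original suffix, which has to spend at least as much to advance in the other coordinate. This is the classical fact for edit distance, and the same argument applies because free anchor edges are diagonal steps. We will also use the simpler facts that moving along an anchor diagonal costs $0$ and that a vertical step costs $1$.

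\textbf{Step 3: the domination.} Fix a candidate $a_i \ne \hat a$ and let $p = (\hat\tbeg,\, \hat\tbeg - \diag(a_i))$ be the point on the diagonal of $a_i$ at $T$-coordinate $\hat\tbeg$. There are two cases.
\begin{itemize}
\item If $\tbeg^i \le \hat\tbeg$: we have $\hat\tbeg \le \tbeg^j < \tend^i$, so $p$ lies on $a_i$ and $d_p \le C[i]$ by walking along the anchor for free.
\item If $\tbeg^i > \hat\tbeg$: $p$ lies on the same diagonal before $\anchorstart(a_i)$, so $d_p \le C[i]$ by repeated use of Step 2.
\end{itemize}
From $p$, going down $\diag(a_i)-\diag(\hat a)$ vertical steps reaches $\anchorstart(\hat a)$. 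Therefore $C[\hat a] = d_{\anchorstart(\hat a)} \le C[i] + \diag(a_i) - \diag(\hat a)$, which is the desired inequality. Subtracting $\diag(a_j)$ from both sides turns it into a comparison of the two case-1a terms.

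\textbf{Expected obstacle.} The main difficulty is the second case of Step 3, where $\anchorstart(a_i)$ lies to the right of $\anchorstart(\hat a)$. There $a_i$ does not precede $\hat a$ in the chain sense, so a direct chain exchange fails, and we must rely on the backward monotonicity of Step 2. The boundary situation $\hat\tbeg - \diag(a_i) < 0$ also needs care. In that case we instead take $p$ as the point where the diagonal of $a_i$ meets the row $q=0$, which satisfies $d_p \le C[i]$ by Step 2. We then reach $\anchorstart(\hat a)$ by moving right and down, and we must check that this still costs at most $\diag(a_i)-\diag(\hat a)$. If it does not, we will fall back on showing that $d_{\anchorstart(\hat a)} \le \hat{\qbeg}$ directly.
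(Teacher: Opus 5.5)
\textbf{There is a genuine gap, in the boundary case, and the repair you propose for it fails.} Most of your argument is sound. Step 1 is correct. Step 2 is the classical diagonal monotonicity, and your path-surgery sketch works: after the last path point in $[0..t-1]\times[0..q-1]$, the suffix lies in row $q$ or column $t$ and uses at most one diagonal step. Your case $\tbeg^i \le \hat{\tbeg}$ is exactly the paper's case $a' \prec \hat a$. The problem is the subcase of $\tbeg^i > \hat{\tbeg}$ with $\hat{\tbeg} < \diag(a_i)$, where your point $p$ would have negative $Q$-coordinate.
\begin{itemize}
\item The diagonal of $a_i$ meets row $q=0$ at $(\diag(a_i),0)$. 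Its $T$-coordinate exceeds $\hat{\tbeg}$, so no right/down path leads from it to $\anchorstart(\hat a)$.
\item Your fallback $d_{\anchorstart(\hat a)} \le \hat{\qbeg}$ is false in general, because every path to $(t,q)$ costs at least $|t-q|$.
\item The subcase does occur. Take all-\texttt{A} strings, $\hat a = ([10..20),[1..11))$, $a_i = ([12..20),[0..8))$ and $a_j = ([15..16),[10..11))$. These satisfy the hypotheses, yet $d_{(10,1)} \ge 9 > 1 = \hat{\qbeg}$.
\end{itemize}

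\textbf{Fix: walk along $\hat a$ instead of along $a_i$.} This is what the paper does by splitting $\hat a$ at the $T$-coordinate of $\anchorstart(a_i)$. Its claim $C'[\hat a^2] = C[\hat a]$ is precisely your Step 2 applied along $\hat a$, so your Step 2 makes that implicit fact explicit.
\begin{itemize}
\item For $\tbeg^i > \hat{\tbeg}$, drop $\diag(a_i)-\diag(\hat a)$ vertical steps from $\anchorstart(a_i)$ to $(\tbeg^i,\, \tbeg^i - \diag(\hat a))$.
\item This point lies on $\hat a$, because $\hat{\tbeg} < \tbeg^i \le \tbeg^j < \hat{\tend}$.
\item Step 2 applied backward along the diagonal of $\hat a$ then gives $d_{\anchorstart(\hat a)} \le C[i] + \diag(a_i) - \diag(\hat a)$. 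That diagonal segment stays inside the grid, so no boundary case arises.
\end{itemize}
Alternatively, you can patch only the boundary subcase. There $C[i] \ge \diag(a_i) > \hat{\tbeg}$, so $C[i] + \diag(a_i) - \diag(\hat a) > \hat{\tbeg} + \hat{\qbeg} \ge \max(\hat{\tbeg},\hat{\qbeg}) \ge d_{\anchorstart(\hat a)}$.
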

\begin{proof}
    Assume by contradiction that $a' = ([\tbeg'..\tend'),[\qbeg'..\qend'))$ is a strictly better overlapping anchor considered by case 1a which is farther away, that is, $\tbeg^j \in [\tbeg'..\tend')$, $\diag(a_j) \le \diag(\hat{a}) < \diag(a')$ (no two anchors overlap in the same diagonal by assumption), and $C[a'] + \diag(a') - \diag(a_j) < C[\hat{a}] + \diag(\hat{a}) - \diag(a_j)$.
    If $a' \prec \hat{a}$, we reach a contradiction, as $C[a'] + \connect(a',\hat{a}) < C[\hat{a}]$ (case 1a) violates the definition of $C[\hat{a}]$ (\Cref{eq:DP}).
    Otherwise $\tbeg' \in (\hat{\tbeg}..\hat{\tend})$ and we can split anchor $\hat{a}$ into two anchors $\hat{a}^1$ and $\hat{a}^2$, the first covering $[\hat{\tbeg}..\tbeg')$ and the second covering $[\tbeg'..\hat{\tend})$.
    Due to \Cref{cor:pgcostdij}, the partial optimal costs $C'$ of the modified instance $(\mathcal{A} \setminus \lbrace \hat{a} \rbrace) \cup \lbrace \hat{a}^1, \hat{a}^2 \rbrace$ are identical for all anchors but $C'[\hat{a}^1] = C'[\hat{a}^2] = C[\hat{a}]$.
    But then $a' \prec \hat{a}^2$ by construction and $C'[a'] + \connect(a',\hat{a}^2) < C'[\hat{a}^2]$, reaching a contradiction.
\end{proof}

Then, as in \cite[Section 3.3]{BakerG02}, we can compute anchors $\hat{a}_j$ with a $T$-coordinate sweep line and a successor data structure on the set $S \subseteq \mathcal{D} := \lbrace \diag(a) : a \in \mathcal{A} \rbrace$ of active anchors overlapping with the sweep line, as implemented in \Cref{preprocess:case1a}.
Since $\lvert \mathcal{D} \rvert \le n$ and $\mathcal{D}$ can be precomputed, the computation takes $O(n \log \log n)$ time and $O(n)$ space.
The overlap case 1b is symmetrical and can be obtained by sorting the points by $Q$-coordinate and using predecessor queries (see line 8 of \Cref{preprocess:case1a}).

\begin{algorithm}[t]
    \SetKwProg{myproc}{Procedure}{}{}
    \myproc{$\textnormal{\texttt{preprocess\_1a}}(\mathcal{A})$}{
        Sort startpoints and endpoints $P = \lbrace (a.\mathsf{start},a) : a \in \mathcal{A} \rbrace \cup \lbrace (a.\mathsf{end},a) : a \in \mathcal{A} \rbrace$ in increasing $T$-coordinate order (in breaking ties, endpoints are smaller)\;
        $\hat{a}_0 \gets \perp$\;
        $\hat{a}_{n+1} \gets \perp$\;
        $S \gets \emptyset$\Comment*[r]{predecessor data structure over $\mathcal{D} = \lbrace \diag(a) : a \in \mathcal{A} \rbrace$}
        \For{$(p,a) \in P$ in sorted order}{%
            \uIf{$p = a.\mathsf{start}$}{%
                    $\hat{a} \gets S.\mathsf{succ}(\diag(a)).\mathsf{value}$ \Comment*[r]{if it exists, otherwise $\perp$}
                $S.\mathsf{insert}(\diag(a) \mapsto a)$\;
            }
            \Else(\Comment*[f]{$p = a.\mathsf{end}$}){%
                $S.\mathsf{remove}(\diag(a))$\;
            }
        }
    }
    \caption{Procedure handling the computation of anchors $\hat{a}_j$ for $j \in [1..n]$, where $\hat{a}_j$ is defined as the anchor overlapping with $a_j$ with the smallest diagonal greater than $\diag(a_j)$, if it exists, otherwise $\hat{a} = \perp$, the dummy anchor. Note that we assume anchor set $\mathcal{A}$ to not have any anchors overlapping or touching along the same diagonal, so no two startpoints or endpoints can be equal.}\label{preprocess:case1a}
\end{algorithm}

\subsection{Gap-gap case, bigger gap in \texorpdfstring{$\bm Q$}{Q}}\label{sub:biggerQgap}
If we are in this case, that is, if $\anchorend(a_i)\preceq \anchorstart(a_j)$ and $\diag(a_i) > \diag(a_j)$, then $a_i \prec a_j$ and
\begin{align}
    C^2[j] = \min_{\substack{i \in [0..j) : \\ \anchorend(a_i)\preceq\anchorstart(a_j), \\ \diag(a_i) > \diag(a_j)}} \big( C[i] + \qbeg^j - \qend^i \big) 
    = \qbeg^j +\min_{\substack{i \in [0..j) : \\ \anchorend(a_i)\preceq\anchorstart(a_j),  \\ \diag(a_i) > \diag(a_j)}} \big( C[i] - \qend^i \big).\label{eq:casetwofinal} 
\end{align}
In geometric terms, we need to consider each anchor whose endpoint is in the first octant from $\anchorstart(a_j)$, as shown in \Cref{fig:cases}.
This octant is characterized by the two equations $\tend^i \le \tbeg^j$ $(i)$ and $\diag(a_i) > \diag(a_j)$ $(ii)$: the former is guaranteed by processing the points from left to right, and the diagonal constraint is handled via a suffix minimum query over diagonal keys, as shown in \Cref{algo:casetwo}.
Indeed, from $(i)$ and $(ii)$ we can derive that $\qend^j \le \qbeg^i$ and thus $\anchorend(a_i) \preceq \anchorstart(a_j)$ as follows:
\begin{align*}
    \diag(a_i) > \diag(a_j) 
    &\,\Longleftrightarrow\,
    \tend^i - \qend^i > \tbeg^j - \qbeg^j \\
    &\,\Longleftrightarrow\,
    \qbeg^j - \qend^i > \tbeg^j - \tend^i \ge 0 &(\tend^i \le \tbeg^j) \\
    &\,\,\Longrightarrow\, \qend^j \le \qbeg^i.
\end{align*}

We briefly discuss \Cref{algo:casetwo}.
As originally presented in \cite[Sections 2 and 4]{EppsteinGGI92one} for similar recursions, the recursive values $C[i] - \qend^i$ are stored in a one-dimensional data structure, as our line sweep conveniently gives us one half-plane for free.
Since the set of $\diag(a)$ keys is known in advance and the number of startpoints and endpoints is $2n$, we can handle case 2 in $O(n \log \log n)$ time and $O(n)$ space.

\begin{algorithm}[t]
    \SetKwProg{myproc}{Procedure}{}{}
    \myproc{$\textnormal{\texttt{init\_case\_2}}(\mathcal{A})$}{
        $S \gets \emptyset$\Comment*[r]{suffix min query structure over $\mathcal{D} = \lbrace \diag(a) : a \in \mathcal{A} \rbrace$}
        $S.\mathsf{insert}(\diag(a_0) \mapsto 0)$\Comment*[r]{base case for $a_0 = \startanchor$}
    }
    \myproc{$\textnormal{\texttt{process\_startpoint\_case\_2}}(a_j)$}{
        \Return $\qbeg^j + S.\mathsf{smq}(\diag(a_j) + 1)$\;
    }
    \myproc{$\textnormal{\texttt{process\_endpoint\_case\_2}}(a_i)$}{
        $S.\mathsf{insert}(\diag(a_i) \mapsto C[i] - \qend^i)$\Comment*[r]{insert or update the recursive value}
    }
    \caption{Procedures used in \Cref{algo:main} to handle the computation of $C^2[j]$, the optimal partial cost $\pgcost(A)$ of any $\prec$-chain $A$ ending at anchor $a_j$ with a predecessor $a_i$ in the gap-gap, bigger $Q$-gap case (see \Cref{fig:cases} and \Cref{eq:fourcases}).}\label{algo:casetwo}
\end{algorithm}

\subsection{Gap-gap case, bigger gap in \texorpdfstring{$\bm T$}{T}}\label{sub:biggerTgap}
This is the final and hardest case to handle for the computation of \Cref{eq:fourcases}.
Even though it is symmetrical to case 2, we do not get a half-plane filter for
free as our sweeping of the plane goes from left to right (see
\Cref{fig:cases}) while the constraint is on $q$, and this cannot be easily handled with a one-dimensional data structure or by changing the sweep line.
Indeed, in recent chaining results this type of recursive case has been handled with 2D search data structures, resulting in final time complexities of $O(n \log^2 n)$ or $O(n \log n \log \log n)$ ~\cite{DBLP:journals/almob/OttoHGS11,DBLP:journals/ploscb/RenC21,JainGT22}.
Instead, using the space-partitioning technique of Eppstein et al.~\cite{EppsteinGGI92one}, we can efficiently represent the optimal recursive cases projected onto the sweep line as dynamic one-dimensional segments, as described in the rest of this section.
Our goal is to present a sufficiently detailed variant of the $O(n \log \log n)$-time technique that runs in $O(n)$ space, which is not a widely known result and might be of independent interest.\footnote{The original efficient chaining solutions by Eppstein et al.~\cite{EppsteinGGI92one} and Chao and Miller~\cite{DBLP:journals/algorithmica/ChaoM95} used $O(\lvert T \rvert + \lvert Q \rvert)$ space. Baker and Giancarlo do claim to use $O(n)$ space~\cite[Theorem 4.1]{BakerG02} but do not provide an explicit space-complexity proof for handling the intersection points~\cite[p.\ 251]{BakerG02} as we will later do in \Cref{lem:case3}. Other methods use $O(n)$ space but require $O(n \log \log (|T|+|Q|))$ time.}

Symmetrically to \Cref{eq:casetwofinal}, we can rewrite $C^3[j]$ as
\begin{align}
    C^3[j] &=
    \tbeg^j + \min_{\substack{i \in [0..j) : \\ \anchorend(a_i)\preceq \anchorstart(a_j), \\ \diag(a_i) \le \diag(a_j)}} \big( C[i] - \tend^i \big).\label{eq:casethreefinal}
\end{align}
From this we can derive the following first main insight: each anchor $a_i$ defines a \emph{forward region of influence} where it can precede a following anchor $a_j$ in the sense of \cref{eq:casethreefinal}.
This region (also called ``area of influence'') is triangular and delimited by the horizontal and diagonal lines starting from endpoint $\anchorend(a_i)$ (see \cref{fig:viz}): for any successor $a_j$ with $\anchorstart(a_j) = (t, q)$, we consider the intersection of the half-plane $\qend^i \le q$ bounded by the horizontal line through $\anchorend(a_i)$ and the half-plane $\diag(\tbeg^j, \qbeg^j) \geq \diag(a_i)$ bounded by the diagonal line through $\anchorend(a_i)$.
When the areas of multiple anchors intersect, those with the smaller value $C[i] - \tend^i$ take precedence, as remarked by the following observation.
\begin{observation}[{Chao and Miller~\cite[Lemma 1]{DBLP:journals/algorithmica/ChaoM95}}]\label{obs:intersectionare}
    Consider any two anchors $a_i = ([\tbeg^i..\tend^i),[\qbeg^i..\qend^i))$ and $a_{i'} = ([\tbeg^{i'}..\tend^{i'}),[\qbeg^{i'}..\qend^{i'}))$.
    For all anchors $a_j$ such that $a_j$ is case-$3$ successor of both $a_i$ and $a_{i'}$ (as in \Cref{eq:casethreefinal}), in symbols $\anchorend(a_i) \preceq \anchorstart(a_j)$,  $\anchorend(a_{i'}) \preceq \anchorstart(a_j)$, $\diag(a_i) \le \diag(a_j)$, and $\diag(a_{i'}) \le \diag(a_j)$, we have that the value $\min(C[i] - \tend^i, C[i'] - \tend^{i'})$ does not depend on the exact position of $a_j$.
\end{observation}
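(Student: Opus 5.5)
The observation is close to a tautology once the case-3 contribution is written in the form of \Cref{eq:casethreefinal}, so the plan is simply to unfold that formula. By \Cref{eq:casethreefinal}, whenever $a_i$ is a case-3 predecessor of $a_j$ (that is, $\anchorend(a_i)\preceq\anchorstart(a_j)$ and $\diag(a_i)\le\diag(a_j)$), its contribution to $C^3[j]$ is
\[
C[i]+\Delta_T(a_i,a_j) = \tbeg^j + \bigl(C[i]-\tend^i\bigr).
\]
The only term depending on $a_j$ is the additive offset $\tbeg^j$. That offset is shared by every candidate predecessor. Hence, for two candidates $a_i$ and $a_{i'}$ that both lie in the case-3 region of $a_j$, the comparison between their contributions reduces to comparing the keys $C[i]-\tend^i$ and $C[i']-\tend^{i'}$.

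First I will verify that the rewriting is valid inside the case-3 region. There, $\Delta_T(a_i,a_j)\ge\Delta_Q(a_i,a_j)\ge 0$, because $\Delta_T-\Delta_Q=\diag(a_j)-\diag(a_i)\ge 0$ using the invariant $\tend-\qend=\tbeg-\qbeg$. It follows that $\connect(a_i,a_j)=\linfty(a_i,a_j)=\Delta_T(a_i,a_j)=\tbeg^j-\tend^i$.

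Next I will note that $C[i]$ and $C[i']$ are fixed numbers determined before $a_j$ is processed, since both are preceding anchors in the topological order. Consequently, the key $C[i]-\tend^i$ depends only on $a_i$, and likewise for $a_{i'}$.

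Finally, I will conclude the statement in its intended meaning. The value $\min(C[i]-\tend^i,\,C[i']-\tend^{i'})$, and in particular which of $a_i$ and $a_{i'}$ attains it, is the same for every $a_j$ in the intersection of the two regions of influence. So one anchor dominates the other uniformly over that intersection.

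The one step requiring care is the geometric check in the first step: both gap-gap conditions together with the diagonal condition must force $\linfty=\Delta_T$, including the boundary case $\diag(a_i)=\diag(a_j)$, where $\Delta_T=\Delta_Q$ and the two expressions coincide. This check is routine.
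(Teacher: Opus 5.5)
Your proposal is correct and follows the same reasoning the paper relies on. The paper gives no separate proof: it cites Chao and Miller and treats the observation as immediate from the rewriting in \Cref{eq:casethreefinal}, where the only $a_j$-dependent term is the common offset $\tbeg^j$, exactly as you argue. Your additional check that $\connect(a_i,a_j)=\Delta_T(a_i,a_j)$ in the case-3 region (via $\Delta_T-\Delta_Q=\diag(a_j)-\diag(a_i)\ge 0$ and $\Delta_Q\ge 0$) is sound. It is something the paper takes for granted when writing \Cref{eq:gap-gap-bigger-Tgap}, rather than something the observation itself needs.
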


Let $a_j = ([\tbeg^j..\tend^j),[\qbeg^j..\qend^j))$ be an arbitrary anchor starting at the current sweep-line coordinate $t$, that is, $\tbeg^j = t$, and consider \Cref{fig:viz}.
\begin{figure}[t]
    \centering
    \input{gap-gap-propagation}
    \caption{Optimal recursive values for case 3 (\Cref{eq:fourcases}) handled in the main left-to-right sweep line.
    The anchors of an optimal $\prec$-chain are shown in red, whereas the other anchors of $\mathcal{A}$ are black.
    Each color corresponds to a region ``owned'' by a different anchor $a_i$, so values $R_t[q]$ of the owner array (\Cref{eq:ownerarray}) can be deduced by the color at the sweep line.}
    \label{fig:viz}
\end{figure}
We now define how the optimal recursive anchors for $a_j$ change based on where the startpoint of $a_j$ is along the sweep line (i.e.\ its $Q$-coordinate $\qbeg^j$) as follows, with ties in the $\argmin$ broken to larger $i$:
\begin{equation}
    R_t[q] = \argmin_{\substack{i \in [0..n) : \\ \tend^i \le t, \, \diag(a_i) \le t - q}} \big( C[i] - \tend^i \big), \qquad\qquad\text{where}\; \argmin (\emptyset) = \perp \quad\text{and}\quad q \in [0..\lvert Q \rvert].\label{eq:ownerarray}
\end{equation}
We call $R_t$ the \emph{owner array}, because
value $R_t[q]$ is equal to the anchor index the area belongs to: by construction we have $C^3[j] = \tbeg^j + C[i] - \tend^i$ with $i = R_{\tbeg^j}[\qbeg^j]$.
Note that when there is no anchor $a_i$ such that $\anchorend(a_i) \preceq (t,q)$, then $R_t[q] = \perp$, where $\perp$ is the dummy anchor, and $C^3[j] = +\infty$.

The owner array has size $O(\lvert Q \rvert)$, so it would be too expensive to maintain it as a flat array.
Instead, Eppstein et al.~\cite{EppsteinGGI92one} employ a compact $O(n)$-space version of $R_t$ as a doubly-linked list (called OWNER list) using the following second main insight: while the areas of influence can have complex interactions and shapes as seen in \Cref{fig:viz}, they are always delimited by some horizontal or diagonal line starting in an anchor endpoint.

\enlargethispage{2em}

\begin{observation}[{\cite{EppsteinGGI92one}}]\label{obs:dividinglines}
    For all changes in anchor index $R_t[q] \neq R_t[q+1]$ with $q \in [1..\lvert Q \rvert]$, there is a unique \emph{active dividing line} 
    such that the following holds for some unique anchor $a_i$, $i \in [0..n]$:
    \begin{itemize}
    \item (horizontal dividing line) the active line is $q+1=\qend^i$ and $R_t[q+1] = i$, that is, $a_i$ owns the region of influence \emph{starting} at $q+1$; or
    \item (diagonal dividing line) the active line is $\diag(t, q) = \diag(a_i)$ and $R_t[q] = i$, that is, $a_i$ owns the area of influence \emph{ending} at $q$.
    \end{itemize}
\end{observation}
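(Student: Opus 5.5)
We prove \Cref{obs:dividinglines} by fixing a change point $R_t[q] \neq R_t[q+1]$ and comparing the candidate sets of the two positions in \Cref{eq:ownerarray}. Write $F_t(q) = \lbrace i : \tend^i \le t,\ \qend^i \le q,\ \diag(a_i) \le t-q \rbrace$. Here the constraint $\qend^i \le q$ is the one implicit in $\anchorend(a_i) \preceq (t,q)$, and we read \Cref{eq:ownerarray} with it included, consistently with \Cref{eq:casethreefinal}. Then $R_t[q]$ is the $\argmin$ of the fixed weights $w_i = C[i] - \tend^i$ over $F_t(q)$, with ties broken towards larger $i$. The weights do not depend on $q$ (\Cref{obs:intersectionare}), so this is a strict total order on indices. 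Hence $R_t[q]$ and $R_t[q+1]$ can differ only if the candidate sets differ. Moving from $q$ to $q+1$ does two things: the horizontal constraint $\qend^i \le q$ gains exactly the anchors with $\qend^i = q+1$, and the diagonal constraint $\diag(a_i) \le t-q$ loses exactly the anchors with $\diag(a_i) = t-q = \diag(t,q)$. Thus $F_t(q+1) = (F_t(q) \setminus L) \cup G$, where $G$ contains anchors with endpoint on the horizontal line $q+1$ and $L$ contains anchors on the diagonal through $(t,q)$.

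Next we do the case analysis. Let $i = R_t[q]$ and $i' = R_t[q+1]$. If $i \notin L$, then $i$ is still a candidate at $q+1$. Since the minimiser changed, $i'$ must beat $i$, so $i'$ cannot lie in $F_t(q)$ (otherwise it would already have beaten $i$ at $q$). Hence $i' \in G$, so $\qend^{i'} = q+1$ and $R_t[q+1] = i'$: this is a horizontal dividing line. If instead $i \in L$, then $\diag(a_i) = \diag(t,q)$ and $R_t[q] = i$: this is a diagonal dividing line. This proves existence, and the anchor is the one named by the owner value.

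Uniqueness is the delicate step. It could fail if both a horizontal and a diagonal witness occur at the same change point, i.e.\ $i \in L$ and also $i' \in G$. We break this by making the rule deterministic: the active line is defined as the diagonal one exactly when $R_t[q] \in L$, and the horizontal one otherwise. Uniqueness of the anchor itself comes from the merging assumption (no two anchors overlap or touch on the same diagonal), together with the fact that the owner is a single $\argmin$ index. For the diagonal case the anchor is $R_t[q]$ itself. For the horizontal case it is $R_t[q+1]$, which is determined. The remaining point to check is that when $i \in L$ the horizontal reading is not also forced. I expect this to be the main obstacle. I would handle it by arguing that the dividing line is counted as active from the side of its owner: in the compact OWNER list each boundary is stored as a single line attached to one owner, so exactly one line is recorded per change. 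This is what the later $O(n)$-space argument needs.
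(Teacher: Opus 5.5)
Your existence argument is correct and complete. You are right that \Cref{eq:ownerarray} must be read with $\qend^i \le q$ added. As printed, the candidate set only shrinks as $q$ grows, so horizontal lines could never be witnesses. With that constraint, the identity $F_t(q+1) = (F_t(q)\setminus L)\cup G$ and your two-case analysis give a diagonal witness when $R_t[q]\in L$ and a horizontal one otherwise. Only the $\perp$ cases need a sentence: if $F_t(q)=\emptyset$ then $F_t(q+1)=G$, and if $F_t(q+1)=\emptyset$ then $F_t(q)\subseteq L$. The paper gives no proof here (it defers to Eppstein et al.), so this is the natural argument.

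The gap is in the uniqueness half. The ``main obstacle'' you flag cannot be overcome, because both readings can hold at the same change, so literal uniqueness of the line is false. Take $a_0$, $a_1=([3..5),[0..2))$ and $a_2=([6..7),[5..6))$:
\begin{itemize}
\item $C[1]=3$ and $C[2]=6$, so the weights are $0,-2,-1$;
\item at $t=8$ the regions are $[0..8]$, $[2..5]$ and $[6..7]$;
\item so $R_8[5]=1$ with $\diag(8,5)=3=\diag(a_1)$, and $R_8[6]=2$ with $\qend^2=6$.
\end{itemize}
This is a diagonal and a horizontal witness for the same change. Your deterministic convention is therefore the right repair, and once you adopt it nothing remains to check.

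The justification you sketch afterwards should be dropped, for two reasons. It is circular: the compact OWNER list is what this observation is meant to justify, not the other way round. It is also inaccurate for \Cref{algo:casethree}, which inserts both $(\rightarrow,a_i)$ and $(\searrow,a_i)$; in the example above, both lines sit at the single change.

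The merging assumption also plays no role in the uniqueness of the anchor. Several anchors can share a horizontal line (equal $\qend$) or a diagonal line (disjoint anchors on one diagonal). The anchor is unique simply because each witness type is pinned to the single value $R_t[q]$ or $R_t[q+1]$.

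What the $O(n)$-space argument actually needs is the converse injectivity, which your set identity gives immediately. At a fixed $t$, the horizontal line of $a_k$ can witness only the change at $q=\qend^k-1$, and its diagonal line only the change at $q=t-\diag(a_k)$. Hence there are at most $2(n+1)$ changes.
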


Then, as shown in \Cref{algo:casethree}, we can represent $R_t$ in a compressed $O(n)$-space representation consisting of two predecessor data structures of diagonals mapped to anchors and a doubly-linked list of active lines from \Cref{obs:dividinglines}.
These lines, which are separately indexable via two dynamic predecessor data structures (\Cref{subsec:predecessor}), are used to partition the sweep line regions with different $R_t[q]$ values.
Note that \Cref{obs:dividinglines} does not indicate which anchor owns the region between a diagonal and a horizontal line, hence we need to also store the different values of $R_t[q]$ as associated values of the elements in the list.
We obtain the following result and thus complete the proof of \Cref{theo:nloglogn}.

\begin{algorithm}[pt]
    \SetKwProg{myproc}{Procedure}{}{}
    \myproc{$\textnormal{\texttt{init\_case\_3}}(\mathcal{A})$}{
        $H \gets \emptyset$\Comment*[r]{predecessor data structure over $\mathcal{H} = \lbrace \qend : ([\tbeg..\tend),[\qbeg..\qend)]) \in \mathcal{A} \rbrace$}
        $D \gets \emptyset$\Comment*[r]{predecessor data structure over $\mathcal{D} = \lbrace \diag(a) : a \in \mathcal{A} \rbrace$}
        $H.\mathsf{insert}(0 \mapsto a_0)$\Comment*[r]{base case for $a_0 = \startanchor$}
        $D.\mathsf{insert}(0 \mapsto a_0)$\Comment*[r]{base case for $a_0 = \startanchor$}
        $\overline{R} \gets \big(  ((\rightarrow,a_0) \mapsto a_0), ((\searrow,a_0) \mapsto \perp) \big)$\;\Comment*[f]{doubly-linked list of dividing lines (\Cref{obs:dividinglines}) representing $R_t$, where each line is mapped to anchor $a_{R_t[q]}$ of the corresponding region.}\BlankLine
        $Q \gets \big\lbrace t \mapsto \emptyset : t \in \lbrace \tbeg^j : j \in [1..n]\rbrace \cup \lbrace \tend^j : j \in [1..n] \rbrace\big\rbrace$\;\Comment*[f]{Intersection update queues before each $T$-coordinate in the input.}\BlankLine
    }
    \myproc{$\textnormal{\texttt{process\_startpoint\_case\_3}}(a_j)$}{
        Process all intersection updates in $Q[\tbeg^j]$\;
        $a_{\rightarrow} \gets H.\mathsf{pred}(\qbeg^j + 1).\mathsf{value}$\Comment*[r]{closest $\rightarrow$ at or above $\qbeg^j$}
        $a_{\searrow} \gets D.\mathsf{succ}(\diag(a_j) - 1).\mathsf{value}$\Comment*[r]{closest $\searrow$ at or above $\diag(a_j)$}
        \uIf(\Comment*[f]{$(\rightarrow,a_\rightarrow)$ is closer}){$\qbeg^\rightarrow < \tbeg^j - \diag(a_\searrow)$}{%
            $a_i \gets \overline{R}.\mathsf{node}(\rightarrow,a_\rightarrow).\mathsf{value}$\;
        }
        \Else(\Comment*[f]{$(\searrow,a_\searrow)$ is closer}){%
            $a_i \gets \overline{R}.\mathsf{node}(\searrow,a_\searrow).\mathsf{value}$\;
        }
        \Return $\qbeg^j + C[i] - \tend^i$\Comment*[r]{\Cref{eq:casethreefinal}}
    }
    \myproc{$\textnormal{\texttt{process\_endpoint\_case\_3}}(a_i)$}{
        Process all intersection updates in $Q[\tend^i]$\;
        $a_{\rightarrow} \gets H.\mathsf{pred}(\qend^i).\mathsf{value}$\Comment*[r]{closest $\rightarrow$ above $\qend^i$}
        $a_{\searrow} \gets D.\mathsf{succ}(\diag(a_i)).\mathsf{value}$\Comment*[r]{closest $\searrow$ above $\diag(a_i)$}
        \uIf(\Comment*[f]{$(\rightarrow,a_\rightarrow)$ is closer}){$\qbeg^\rightarrow < \tend^i - \diag(a_\searrow)$}{%
            $a_k \gets \overline{R}.\mathsf{node}(\rightarrow,a_\rightarrow).\mathsf{value}$\;
            \If{$C[i] - \tend^i \le C[k] - \tend^k$}{%
                $\overline{R}.\mathsf{node}(\rightarrow,a_\rightarrow).\mathsf{insert}\texttt{\_}\mathsf{next}((\rightarrow,a_i) \mapsto a_i)$\;
                $\overline{R}.\mathsf{node}(\rightarrow,a_i).\mathsf{insert}\texttt{\_}\mathsf{next}((\searrow,a_i) \mapsto a_k)$\;
            }
        }
        \Else(\Comment*[f]{$(\searrow,a_\searrow)$ is closer}){%
            $a_k \gets \overline{R}.\mathsf{node}(\searrow,a_\searrow).\mathsf{value}$\;
            \If{$C[i] - \tend^i \le C[k] - \tend^k$}{%
                $\overline{R}.\mathsf{node}(\searrow,a_\searrow).\mathsf{insert}\texttt{\_}\mathsf{next}((\rightarrow,a_i) \mapsto a_i)$\;
                $\overline{R}.\mathsf{node}(\rightarrow,a_i).\mathsf{insert}\texttt{\_}\mathsf{next}((\searrow,a_i) \mapsto a_k)$\;
            }
        }
        $H.\mathsf{insert}(\diag(a_i) \mapsto a_i)$\Comment*[r]{line $(\rightarrow,a_i)$ becomes active}
        $D.\mathsf{insert}(\diag(a_i) \mapsto a_i)$\Comment*[r]{line $(\searrow,a_i)$ becomes active}
        If $\overline{R}.\mathsf{node}(\rightarrow,a_i).\mathsf{prev}$ is a diagonal line, queue their intersection update at $Q[t]$ with $t$ the closest $T$-coordinate after or including the intersection\;
        If $\overline{R}.\mathsf{node}(\searrow,a_i).\mathsf{next}$ is a horizontal line, queue their intersection update at $Q[t]$ with $t$ the closest $T$-coordinate after or including the intersection\;
    }    \caption{Partial implementation of the handling of the gap-gap, bigger $T$-gap case (\Cref{eq:fourcases}), as part of \Cref{algo:main}, and following Eppstein et al.~\cite{EppsteinGGI92one}. The doubly-linked list $\overline{R}$ is a compact representation of the owner array $R_t$ (\Cref{eq:ownerarray}) of elements $(d,a) \in \lbrace \rightarrow,\searrow \rbrace \times \mathcal{A}$ mapping each dividing line to the originating anchor $a_{R_t[q]}$.
    We omit the handling of intersection updates, and we omit the edge cases in \texttt{process\_endpoint\_case\_3} where $\anchorend(a_i)$ falls on top of an active diagonal or horizontal line.
    }\label{algo:casethree}
\end{algorithm}

\begin{lemma}\label{lem:case3}
    We can compute values $C^3[j]$ (\Cref{eq:casethreefinal}) as part of the main left-to-right line sweep (\Cref{algo:main}) in $O(n \log \log n)$ time and $O(n)$ space.
\end{lemma}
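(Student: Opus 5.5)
We prove \Cref{lem:case3} by maintaining, during the left-to-right sweep, the compact representation $\overline{R}$ of the owner array $R_t$ from \Cref{eq:ownerarray}, as in \Cref{algo:casethree}, and by showing three things: (a) $\overline{R}$ correctly encodes $R_t$ at every sweep position where a query is issued; (b) every operation costs $O(\log\log n)$ amortized; and (c) the total number of list nodes and pending intersection events is $O(n)$ at all times.

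\textbf{Correctness.} The invariant, by induction over the sorted event list $P$, is that $\overline{R}$ lists, in increasing $q$, exactly the active dividing lines of \Cref{obs:dividinglines} at the current $t$, each labelled with the owner of the region it delimits.
\begin{itemize}
\item \emph{Startpoint queries.} By \Cref{obs:dividinglines}, the owner of $\qbeg^j$ is the label of the nearest active line at or above $\qbeg^j$. That line is either the nearest horizontal line, found by a predecessor query in $H$, or the nearest diagonal line, found by a successor query in $D$. Comparing their $q$-positions selects the right one, and \Cref{eq:casethreefinal} then gives $C^3[j]$.
\item \emph{Endpoint insertions.} When $\anchorend(a_i)$ is processed, its region of influence is a new wedge opening at $\anchorend(a_i)$. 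By \Cref{obs:intersectionare}, inside the wedge the comparison between $a_i$ and any previous owner $a_k$ does not depend on position. Hence $a_i$ either owns the entire part of the wedge currently owned by $a_k$, or none of it. Inserting the pair $(\rightarrow,a_i),(\searrow,a_i)$ with the labels shown in \Cref{algo:casethree} therefore preserves the invariant locally.
\item \emph{Intersections.} Because horizontal and diagonal lines have different slopes, adjacent lines in $\overline{R}$ can only change order when a diagonal line lying above a horizontal line meets it. At that point the region between them vanishes. We delete the dominated line, and the two lines that become new neighbours may themselves generate a new intersection event.
\end{itemize}

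\textbf{Time.} The algorithm only needs an intersection to be processed before the next event coordinate of $P$ at or after it. We therefore bucket each intersection into the queue $Q[t]$ for the smallest event $T$-coordinate $\ge$ the intersection point, found by a successor query over the $O(n)$ known coordinates in $O(\log\log n)$. Every line is inserted once and deleted at most once. Each deletion creates at most one new adjacent pair, hence at most one new scheduled intersection. The total number of intersection events is therefore $O(n)$. Each operation is a constant number of predecessor/successor operations on $H$, $D$, or the coordinate set, all over key universes of size $O(n)$ known in advance. After ranking with \cite{DBLP:conf/stoc/Han02}, these cost $O(\log\log n)$ each (\Cref{subsec:predecessor}), for a total of $O(n\log\log n)$.

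\textbf{Space, the main obstacle.} The delicate point, flagged in the footnote about \cite{BakerG02}, is showing that pending intersection events stay $O(n)$ and do not depend on $|T|$. I would handle this by storing at most one pending event per adjacent pair in $\overline{R}$, and invalidating it lazily: each list node carries a pointer to its scheduled event, and the event is marked dead when either endpoint node is deleted or gets a new neighbour. Since $\overline{R}$ has at most $2n+2$ nodes, there are $O(n)$ live events. Dead events are charged to the list modifications that killed them, which also total $O(n)$. The queues $Q[\cdot]$ are indexed only by the $O(n)$ input coordinates, and $H$, $D$ and $\overline{R}$ each hold $O(n)$ elements.

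A second subtlety is processing all intersections in $Q[t]$ before the queries at $t$. Within one bucket, these intersections must be handled in increasing position order, or at least in an order that is consistent with the invariant. I would argue that processing them in arbitrary order yields the same final list, because each intersection removes a zero-width region and two such removals commute. The degenerate cases, an endpoint lying exactly on an active line, are handled by the tie-breaking rule toward larger $i$ in \Cref{eq:ownerarray}.
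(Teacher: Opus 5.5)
Your proposal follows the paper's proof sketch essentially step for step. You delay each intersection update to the first column at or after it that contains an anchor start or end, located by a successor query over those $O(n)$ positions and processed before that column's query or insertion; you handle each batch in arbitrary order because removals only shrink the list of dividing lines and owners to a subsequence; and you charge the $O(n)$ intersection events to the $2n$ lines, which is exactly what brings the space from $O(|T|+|Q|)$ down to $O(n)$. Your lazy invalidation of events whose adjacent pair is later split by an insertion is a detail the paper leaves implicit; note also that both you and the paper gloss over the fact that the bucketing key $\qend^m+\diag(a_i)$ is not a precomputed key, so rank reduction alone does not give $O(\log\log n)$ for that successor query (a linear-space $O(\log\log n)$ integer priority queue over pending intersection times, such as Thorup's, fills this in).
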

\begin{proof}[Proof sketch]
The correctness follows from \Cref{eq:casethreefinal}, \Cref{obs:intersectionare,obs:dividinglines}, and from the correct handling of the compressed representation of $R_t$ as partially shown in \Cref{algo:casethree} and fully shown in \cite[Section 4]{EppsteinGGI92one}.
As discussed in \cite{EppsteinGGI92one}, the total number of diagonal or horizontal lines is $2n$ and the cost of handling intersections can be charged to the diagonal and horizontal lines of origin.
The only modification to the solution by Eppstein et al.\ is that to reach both $O(n \log \log n)$ time and $O(n)$ space, we delay the intersection updates until we reach the closest column to the right of it that contains an anchor start or end, and we execute these updates in an unsorted, first-in-first-out order.
We argue that this modification maintains correctness, since in a batch of intersection updates we remove areas of influence without changing the order (i.e.\ we get a subsequence of $R_t$ and of the active dividing lines).
We can check for new intersection points at any modification of $R_t$: if they occur before the current $t$ they are handled directly; otherwise they are queued accordingly in $O(\log \log n)$ time per point using a dynamic predecessor query structure on the $T$-positions hit by some anchor startpoint or endpoint.
We leave the full pseudocode implementation and a more detailed proof for a future extended version of this work.
\end{proof}

\begin{figure}[t]
    \centering
    \includegraphics[width=0.8\textwidth]{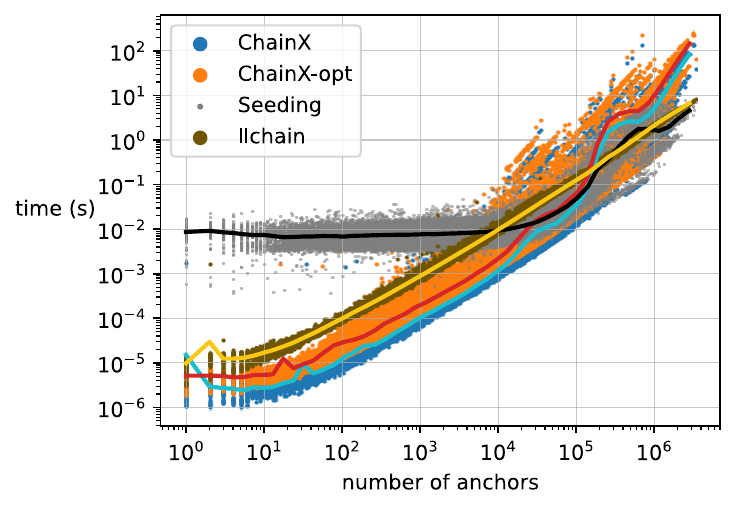}%
    \caption{
      Time spent on seeding (grey, searching MEMs using MUMmer4) and chaining the resulting anchors using
      \texttt{ChainX}, \texttt{ChainX-opt}, and \texttt{llchain} for MEMs of
      length $\geq 50$ between 100\,000 HiFi reads and a human genome reference.
      The highlighted curves indicate the average running time after bucketing
      by the number of anchors.
      }%
    \label{fig:results}
\end{figure}

\section{Experiments}\label{sec:experiments}
We implemented our algorithm in \texttt{llchain} and compare it to
\texttt{ChainX} and \texttt{ChainX-opt}.
Our implementation uses a standard C++ \texttt{std::map} as the predecessor structure and a segment tree for suffix minimum queries, with $O(\log n)$ time operations, and uses the \texttt{std::sort} and \texttt{std::stable\_sort} sorting implementations.
Experiments are run on an Intel Xeon Gold 6148 2.40GHz machine with 376 GB of RAM in the University of Helsinki Kale cluster (the Slurm job is run with options \texttt{{-}{-}exclusive} and \texttt{{-}{-}mem=60G}).

Similar to~\cite{RizzoCM25},
we consider 100\,000 PacBio HiFi reads of a human genome%
\footnote{Run m64004 available at \url{https://s3-us-west-2.amazonaws.com/human-pangenomics/index.html?prefix=T2T/scratch/HG002/sequencing/hifi/}.}
as generated by the T2T project.
We then use MUMmer4 \cite{mummer-1,kurtz2004versatile,mummer-4} to find all
maximal exact matches (MEMs) of length at least 50 between each read and the concatenation of all
chromosomes in the T2T-CHM13 human genome reference \cite{complete-human-genome,chm13v2}.
We exclude from the analysis 216 reads (0.2\%) presenting erroneous inexact matches reported by MUMmer4.\footnote{See \url{https://github.com/mummer4/mummer/issues/235}.}
For each method, we measure how long it takes to
sort the MEMs, double check the absence of overlapping or touching anchors on same diagonal\footnote{Note that \texttt{llchain} can handle an arbitrary anchor set in input, since if the check fails it joins any anchor pair $a,a'$ such that $\connect(a,a') = 0$: the anchored edit distance is unaffected.}, and
then find the best semi-global chain.
We call each method directly via its API.

Results can be seen in \cref{fig:results}.
\texttt{llchain} is around $10\times$ slower than \texttt{ChainX} and $4\times$
slower than \texttt{ChainX-opt} for 100 to 10\,000 anchors, but in this case, the
time needed for seeding dominates. When the number of anchors is large (above
100\,000), \texttt{ChainX} and \texttt{ChainX-opt} have a high variance in their
running time, and likely slow down when there are many matches in repetitive regions.
In these cases, the running time grows roughly quadratically, and the slow
instances dominate the overall running time.
On the other hand, \texttt{llchain} has a very consistent running time and, as
predicted by the theory, has a complexity roughly linear in the number of anchors.
Overall, seeding takes on average
0.044s,
\texttt{ChainX}
0.133s,
\texttt{ChainX-opt}
0.243s,
\texttt{llchain}
0.056s.
Thus, \texttt{llchain} is
only slightly slower than the seeding on average, whereas the largest cases hurt
the other methods, making them over
$2.3\times$
slower.
The exact methods \texttt{ChainX-opt} and \texttt{llchain} always agree on the optimal cost, and consistently with the chaining of maximal unique matches~\cite{RizzoCM25} \texttt{ChainX} returns a suboptimal chain for 2904 reads (2.9\% of the reads considered): for this subset of reads, the median improvement of the anchored edit distance is 1296, and the median improvement ratio is significant, $80\times$.
Separate fuzz tests show that \texttt{llchain} gives the same results as a naive $O(n^2)$-time implementation of \Cref{eq:fourcases}.

\section{Conclusion}

Our new algorithm for anchored edit distance combines many existing ideas and runs in $O(n \log \log n)$ time and $O(n)$ space.
Unlike the original chaining algorithm with this time complexity~\cite{EppsteinGGI92one,DBLP:journals/algorithmica/ChaoM95}, it achieves $O(n)$ space rather than e.g.\ $O(|Q|+|T|)$ space, which might be of independent interest.
While a faithful $O(n \log \log n)$-time implementation is unlikely to be competitive in practice (see e.g.~\cite{navarropredecessor}), we have showed the $O(n \log n)$-time behavior of our simplified \texttt{llchain} in a realistic long-read setting.
We further note that our implementation is not yet optimized, and in particular
that \texttt{std::map} and \texttt{std::list} are known to be slow.
Future versions of \texttt{llchain} will use B-trees or one of the engineered predecessor implementations of
Dinklage et al.~\cite{engineering-predecessor}, which are reported to be
respectively $4\times$ and $16\times$ faster than \texttt{std::map}, and it should be possible to remove the linked list, as it is implicitly represented by the predecessor structures on horizontal and diagonal lines.

Another current limitation of our work is that we do not assess the quality of the anchored edit distance for long reads compared to the other chaining methods, nor we integrate \texttt{llchain} in a seed-chain-extend workflow.
Towards this direction, future work includes extending our method to the gap cost of A*PA \cite{DBLP:journals/bioinformatics/KoerkampI24}, as well as a generalization that handles inversions.
Furthermore, it should be possible to adapt the method to allow for arbitrary fragment scores, affine gap cost penalties, and inexact matches.

\bibliography{bibliography}



\end{document}